\newcounter{smallarabics}
\newenvironment{arabicenumerate}
{\begin{list}{{\normalfont\textrm{(\arabic{smallarabics})}}}
  {\usecounter{smallarabics}\setlength{\itemindent}{0cm}
   \setlength{\leftmargin}{5ex}\setlength{\labelwidth}{4ex}
   \setlength{\topsep}{0.75\parsep}\setlength{\partopsep}{0ex}
   \setlength{\itemsep}{0ex}}}
{\end{list}}
\newcounter{smallroman}
\newcommand{\ben}{\begin{arabicenumerate}}  
\newcommand{\een}{\end{arabicenumerate}}
\newtheorem{theorem}{Theorem}[section]
\newtheorem{proposition}[theorem]{Proposition}
\newtheorem{lemma}[theorem]{Lemma}
\theoremstyle{definition}
\newtheorem{definition}[theorem]{Definition}
\newtheorem{remark}[theorem]{Remark}
\newtheorem{example}[theorem]{Example}
\newcommand{\beq}{\begin{equation}}
\newcommand{\eeq}{\end{equation}}
\newcommand{\bea}{\begin{aligned}}
\newcommand{\eea}{\end{aligned}}
\newcommand{\bear}{\begin{array}{rl}}
\newcommand{\eear}{\end{array}}
\newcommand{\bex}{\begin{example}}
\newcommand{\eex}{\end{example}}
\def\bel{\begin{lemma}}
\def\eel{\end{lemma}}
\def\bet{\begin{theoreme}}
\def\eet{\end{theoreme}}
\def\bed{\begin{definition}}
\def\eed{\end{definition}}
\def\ber{\begin{remark}}
\def\eer{\end{remark}}
\def\bep{\begin{proposition}}
\def\eep{\end{proposition}}
\newcommand{\qeds}{\qed\medskip}
\let\origmaketitle\maketitle
\def\maketitle{
  \begingroup
  \def\uppercasenonmath##1{} 
  \let\MakeUppercase\relax 
	\origmaketitle
  \endgroup
}
\def\bar{\overline}
\DeclareMathAlphabet{\pazocal}{OMS}{zplm}{m}{n}
\def\cA{{\pazocal A}}
\def\cC{{\pazocal C}}
\def\cD{{\pazocal D}}
\def\cF{{\pazocal F}}
\def\cM{{\pazocal M}}
\def\cN{{\pazocal N}}
\def\cO{{\pazocal O}}
\def\cQ{{\pazocal Q}}
\def\cU{{\pazocal U}}
\def\sH{\mathcal{H}}
\def\rr{{\mathbb R}}
\def\cc{{\mathbb C}}
\def\nn{{\mathbb N}}
\def\ss{{\mathbb S}}
\def\sgn{{\rm sgn}}
\DeclareMathOperator{\tr}{tr}
\DeclareMathOperator{\supp}{supp}
\DeclareMathOperator{\WF}{WF}
\def\14{\frac{1}{4}}
\def\12{\frac{1}{2}}
\def\d{{\rm d}}
\def\sgn{{\rm sgn}}
\DeclareSymbolFont{boldoperators}{OT1}{cmr}{bx}{n}
\newcommand*{\defeq}{\mathrel{\rlap{%
                     \raisebox{0.34ex}{$\m@th\cdot$}}%
                     \raisebox{-0.4ex}{$\m@th\cdot$}}%
                     =}
\newcommand*{\eqdef}{=\mathrel{\rlap{%
                     \raisebox{0.34ex}{$\m@th\cdot$}}%
                     \raisebox{-0.4ex}{$\m@th\cdot$}}%
                     }
\def\WF{{\rm WF}}
\def\dVol{\mathop{}\!d{\rm vol}}
\def\MI{{\rm M}_{\rm I}}
\def\MII{{\rm M}_{\rm II}}
\def\MIII{{\rm M}_{\rm III}}
\def\Mm{{\rm M}_{-}}
\def\M+{{\rm M}_{+}}
\def\Mc{{\rm M}_{c}}
\def\2Sol{{\rm Sol}_{{\rm L}^{2}}}
\newcommand{\abs}[1]{{\left\vert #1 \right\vert}}
\newcommand{\norm}[1]{{\left\Vert #1 \right\Vert}}
\newcommand{\IP}[1]{{\left\langle #1 \right\rangle}}
\numberwithin{equation}{section}
\begin{document}

\title[Quantum fields in rotating black holes]{\Large Exploring quantum fields in rotating black holes}

\author{Christiane K. M. \textsc{Klein}}
\address{Universit\'e Grenoble Alpes, Institut Fourier, 100 rue des Maths, 38610 Gi\`eres, France\\ CY Cergy Paris Universit\'e, 2 avenue Adolphe Chauvin, 95302 Cergy-Pontoise, France\\
Department of Mathematics, University of York, Heslington, York YO10 5DD, United Kingdom}
\email{christiane.klein@york.ac.uk}
\keywords{Quantum Field Theory on curved spacetimes, Hadamard states, Kerr-de Sitter spacetime, Hawking temperature}
\subjclass[2010]{81T13, 81T20, 35S05, 35S35}
\thanks{\emph{Acknowledgments.} The author would like to thank D.W. Janssen for helpful comments on an earlier version of the manuscript. The author acknowledges support from the Agence Nationale de la Recherche (ANR), funding ANR-20-CE40-0018-01. The author was funded by the Deutsche Forschungsgemeinschaft (DFG, German
Research Foundation) – Projektnummer 531357976.}
\begin{abstract}
In this paper, we discuss the Unruh state for a free scalar quantum field on Kerr-de Sitter under the assumption of mode stability. We summarise the proof of its Hadamard property that was previously given in 
[C.Klein, Annales Henri Poincaré 24 (2023) 7, 2401-2442]
for sufficiently small black-hole rotation and cosmological constant, and show how it can be generalised to any subextreme black-hole angular momentum in the same range of the cosmological constant. This is done by extending a geometric analysis of the trapped set of the Kerr spacetime
[D. Häfner, C. Klein, Lett.Math.Phys. 114 (2024) 5, 119]
to Kerr-de Sitter. Moreover, we discuss the application of this state in the numerical study of quantum effects at the inner horizon
[C. Klein, M. Soltani, M. Casals, S. Hollands, Phys.Rev.Lett. 132 (2024) 12, 121501]
, and describe a universality result for these effects 
[P. Hintz, C. Klein, Class.Quant.Grav. 41 (2024) 7, 075006].
\end{abstract}
\maketitle

\section{Introduction}

In recent years, rotating black holes have become an important playing field to test General Relativity (GR), as well as its extensions and modifications. This is due to the possibilities of observing the shadows of nearby supermassive black holes \cite{EHT} or the gravitational waves emitted by the merger of binary black holes or black hole-neutron star systems \cite{ LIGO}. However, black holes also pose interesting theoretical and conceptual problems, such as the black hole information loss paradox or the strong cosmic censorship conjecture.

In contrast to the information loss paradox, which is linked to the semi-classical effect of black hole evaporation, the strong cosmic censorship conjecture can be studied in classical GR. It is related to the inner horizons present in rotating and charged black holes. Beyond these inner horizons, the spacetimes can be smoothly continued. However, the continuations are not uniquely fixed by any set of complete initial data for the Einstein equations sufficient to specify the black hole spacetimes up to their inner horizon. In this sense, the inner horizons of charged and rotating black holes constitute examples of Cauchy horizons. Penrose's strong cosmic censorship conjecture states that the presence of a Cauchy horizon is an unstable feature and that any generic perturbation of the black hole's initial data will render the spacetime inextendible across the inner horizon at a certain regularity. Since the notion of generic initial data and the required regularity still need to be specified, this is a whole family of conjectures rather than a single one.

In view of the difficulty of studying the Einstein equations with generic initial data, a first step towards (dis-)proving any particular formulation of the conjecture is to study the scalar wave equation on the spacetimes of interest. Within this toy-model, it was shown that the $H^1_{loc}$-version of the conjecture can be violated classically in charged black holes in asymptotic de-Sitter space \cite{HV, CCDHJ}. However, considering the scalar as a quantum field can remedy the situation, since the quantum stress-energy tensor of the scalar field generically shows a stronger divergence at the inner horizon than its classical counterpart \cite{HWZ,HKZ}.

 No such violation of the $H^1_{loc}$-version of the conjecture is known for rotating black holes. Nonetheless, quantum effects may have a strong influence on the geometry at the inner horizon and may, for example, alter the nature of the singularity that forms there. As a first step to test this, one can
study free scalar quantum fields in rotating black hole spacetimes.

There are different approaches to quantum field theory that could be used for such a study, out of which the algebraic approach is particularly well-suited for studying quantum fields in curved spacetimes. The approach focuses first on the algebraic structure of the observables and introduces states at a later stage.
While the algebra of observables for a free scalar field is well understood, performing explicit calculations of expectation values requires an explicitly given state for the theory on a rotating black hole spacetime. This state should also satisfy the Hadamard property, which is necessary for the definition of the stress-energy tensor and other non-linear observables \cite{HW} and, ideally, be physically motivated.
 Hadamard states are abstractly known to exist \cite{FNW, JS}, but there are many situations where an explicitly constructed, physically motivated example employable for calculations is missing. 

In this paper, we review the explicit construction of the Unruh state on Kerr-de Sitter spacetimes under the assumption of mode stability \cite{Klein}. Mode stability is believed to hold on all subextreme Kerr-de Sitter spacetimes, but so far it has only been shown for small angular momentum by perturbation around Schwarzschild-de Sitter \cite{Dyatlov:QNM} and for small cosmological constant (or, equivalently, black hole mass) by perturbation around Kerr \cite{Hintz:QNM}.

The Unruh state is not only physically motivated \cite{U}, but also satisfies the Hadamard property. We also show how the proof of the Hadamard property, given in \cite{Klein} for slowly rotating black holes with small cosmological constants, can be extended to all subextreme rotating black holes with a small cosmological constant. The argument is purely geometrical, and generalises a corresponding result obtained in \cite{KHa} for the Kerr spacetime to Kerr-de Sitter. Finally, we discuss how this state can be used to study the stress-energy tensor of the quantum field at the inner horizon of the black hole. We recall numerical results obtained using this state \cite{KSCH}. By giving a reformulation of the results in \cite{HK}, we not only show that these numerical results are, to leading order, state-independent, but also how this universality of the leading divergence can be generalised to a larger class of quadratic observables.

The paper is structured as follows. In Section \ref{sec:KdS}, we introduce the Kerr and Kerr-de Sitter spacetime and review some of its properties. In Section~\ref{sec:Unruh}, we discuss the free scalar field theory, recall the construction of the Unruh state for this theory on the Kerr-de Sitter spacetime, and show how the proof of the Hadamard property of the Unruh state can be extended. The application of the state for a discussion of the stress-energy tensor at the inner horizon is presented in Section~\ref{sec:appli}. We summarise in Section~\ref{sec:Summary}.

\section{The Kerr(-de Sitter) spacetime}
\label{sec:KdS}
The Kerr(-de Sitter) spacetimes are a family of solutions to the vacuum Einstein equations with a positive cosmological constant $\Lambda$, $Ric(g)=\Lambda g$. They describe stationary rotating black holes. In this section, we give a brief account of its most important properties for the purpose of this work.

Let $\lambda=\Lambda/3\geq 0$, $M>0$, and $a\in\rr$, and set
\begin{align*}
\Delta_r=(1-\lambda r^2)(r^2+a^2)-2Mr\, .
\end{align*}
The roots of this function describe the locations of the various horizons in the rotating black-hole spacetime. 
If $r_1$ and $r_2$ are two consecutive, distinct roots of this polynomial as a function of $r$, then we can define the block of the Kerr(-de Sitter) spacetime between the corresponding horizons in terms of the global Boyer-Lindquist coordinates $(t,r,\theta,\varphi)$ as the manifold $\rr_t\times(r_1,r_2)\times \ss^2_{\theta,\varphi}$, equipped with the metric
\begin{align}
\label{eq:metric BL}
g=\frac{\Delta_\theta a^2\sin^2\theta-\Delta_r}{\chi^2\rho^2}\d t^2+\frac{\sigma^2\sin^2\theta}{\chi^2\rho^2}\d \varphi^2+\frac{\rho^2}{\Delta_r}\d r^2+\frac{\rho^2}{\Delta_\theta}\d  \theta^2-2\frac{a\sin^2\theta\mu}{\chi^2\rho^2}\d t\d\varphi\, .
\end{align}
Here, we have defined
\begin{align*}
\rho^2&=r^2+a^2\cos^2\theta\, , & \Delta_\theta&=1+\lambda a^2\cos^2\theta\, ,  \\
 \chi&=1+\lambda a^2\, , & \mu&=(r^2+a^2)\Delta_\theta-\Delta_r=\lambda\rho^2(r^2+a^2)+2Mr\, , 
\end{align*}
and
\begin{align*}
\sigma^2&=(r^2+a^2)^2 \Delta_\theta-a^2\sin^2\theta \Delta_r =\chi(r^2+a^2)\rho^2+2Mra^2\sin^2\theta\, .
\end{align*}
We are particularly interested in the subextreme parameter regime in which $\Delta_r$ has three (resp. two for $\lambda=0$) distinct real positive roots $r_-<r_+<r_c$ (resp. $r_-<r_+$). In this case, the three relevant blocks of the Kerr(-de Sitter) spacetime are $\MI=\rr_t\times(r_+,r_c)\times\ss^2_{\theta,\varphi}$, $\MII=\rr_t\times (r_-,r_+)\times\ss^2_{\theta,\varphi}$, and $\MIII=\rr_t\times (r_c,\infty)\times\ss^2_{\theta,\varphi}$. In the case of $\lambda=0$, the third block is absent, and the interval $(r_+,r_c)$ in the definition of block $\MI$ is replaced by $(r_+,\infty)$.

The parameters $a$ and $M$ represent the angular momentum per unit mass and the mass of the black hole. By defining the new constants $\tilde a=aM^{-1}$, $\tilde{\lambda}=M^2\lambda$, changing coordinates to $\tilde{t}=M^{-1}t$ and $\tilde{r}=M^{-1}r$, and performing the conformal transformation $g\to \tilde{g}=M^{-2} g$, one can set $M=1$. In the rest of this work, we will do so and drop the tilde-decoration for ease of notation. The subextreme parameter regime in terms of the rescaled parameters $a$ and $\lambda$ is depicted in Figure~\ref{fig:para reg}.

\begin{figure}
\includegraphics[width=0.5\textwidth]{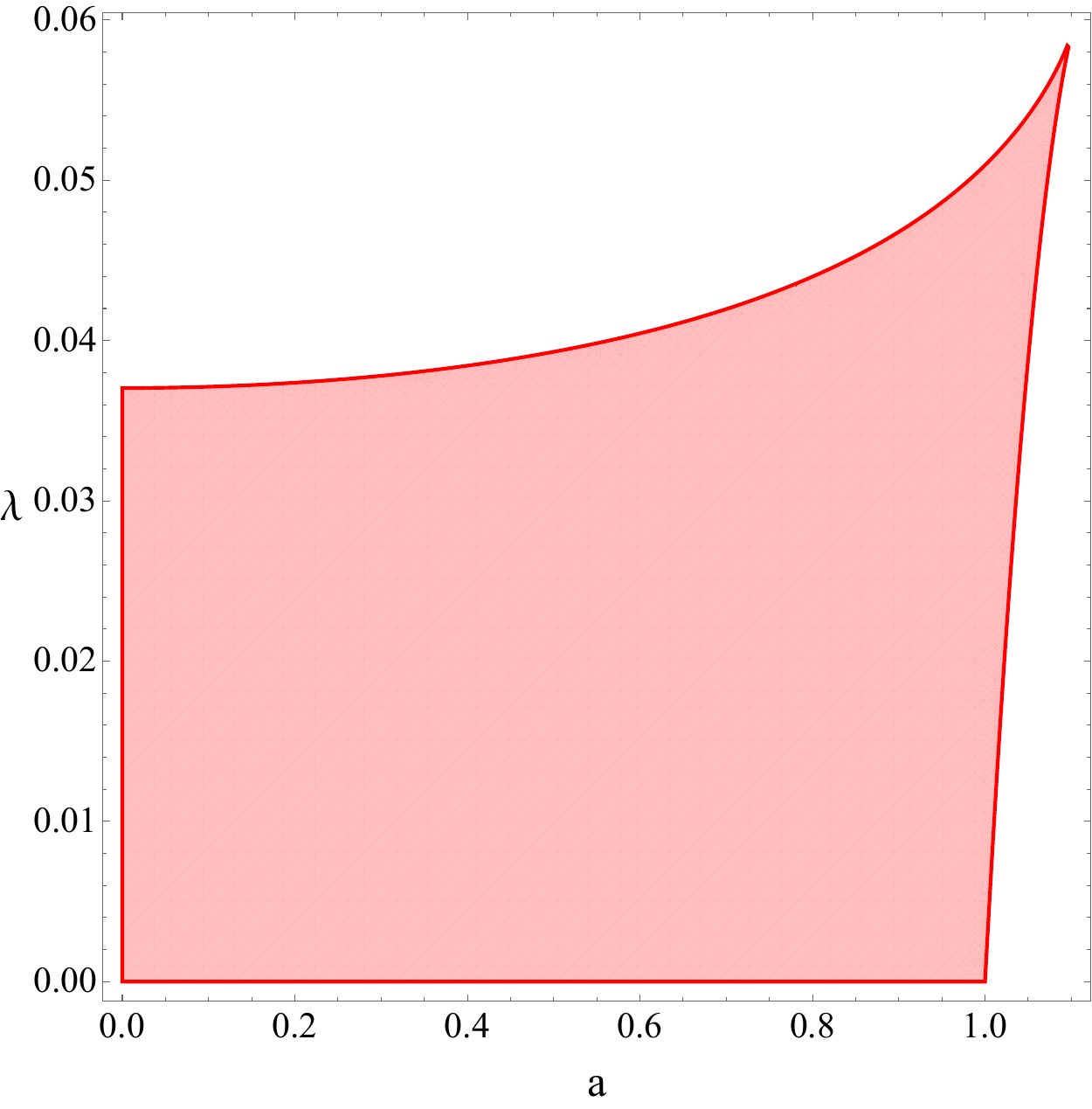}
\caption{The subextreme parameter range of Kerr-de Sitter for $M=1$ in terms of the cosmological constant $\Lambda =3\lambda$ and the black-hole angular momentum per unit mass $a$. Reproduced from \cite{thesis}(C. Klein, PhD thesis, 2023) licensed under a Creative Commons Attribution (CC BY) license.}
\label{fig:para reg}
\end{figure}

The inverse metric in Boyer-Lindquist coordinates reads \cite{Salazar}
\begin{align*}
\rho^2 g^{-1}=-\frac{\chi^2\sigma^2}{\Delta_\theta\Delta_r}\partial_t^2-\frac{2a\chi^2\mu}{\Delta_\theta\Delta_r}\partial_t\partial_\varphi+\frac{\chi^2\left[\Delta_r-a^2\sin^2\theta\Delta_\theta\right]}{\sin^2\theta\Delta_\theta\Delta_r}\partial_\varphi^2+\Delta_r\partial_r^2+\Delta_\theta\partial_\theta^2
\end{align*}
We select a time orientation for the blocks of the Kerr(-de Sitter) spacetime by demanding that
\begin{equation}
-\nabla t=-g^{-1}(\d t,\cdot)=\frac{\chi^2\sigma^2}{\rho^2\Delta_\theta\Delta_r}\left(\partial_t+\frac{a\mu}{\sigma^2}\partial_\varphi\right)
\end{equation}
is future-directed on $\MI$, $-\partial_r$ is future-directed on $\MII$, and $\partial_r$ is future-directed on $\MIII$.

For a causal covector $k$ on $\MI$, we say that it is future-directed if $k(-\nabla t)>0$, and analogously on $\MII$ and $\MIII$, $k$ is future directed if $k(-\partial_r)>0$ or $k(\partial_r)>0$, respectively.
We will denote the set of future-directed null covectors by $\cN^+$. Similarly, we define past-directed null covectors and $\cN^-$. We denote $\cN=\cN^+\cup\cN^-$.

From the form \eqref{eq:metric BL} of the metric, it is apparent that each of the blocks has two Killing vector fields, given in Boyer-Lindquist coordinates by $(\partial_t)^a$ and $(\partial_\varphi)^a$.

\subsection{The Kruskal extension}
\label{sec:Kruskal}
The Boyer-Lindquist blocks do not contain the various horizons of the Kerr(-de Sitter) spacetime, which play an important role in the construction of states by bulk-to-boundary correspondence \cite{DMP, HWZ, Klein}.

However, the blocks introduced above can be embedded isometrically into the Kruskal manifolds $\Mm =\rr_{U_-}\times\rr_{V_-}\times\ss^2_{\theta,\varphi_-}\setminus\{r(U_-, V_-)=0, \theta=\pi/2\}$, $\M+=\rr_{U_+}\times\rr_{V_+}\times\ss^2_{\theta,\varphi_+}$ or $\Mc=\{(U_c, V_c)\in \rr^2: U_cV_c<1\}\times \ss^2_{\theta,\varphi}$, equipped with the metric
\begin{align}
g_i=&f_1^{i} \left( V_i d U_i-U_i d V_i \right)^2 + f_2^{i}\left( V_i^2 d U_i^2 + U_i^2 d V_i^2 \right) + f_3^{i} d U_id V_i \\\nonumber
&+ f_4^{i} d \varphi_i \left(V_id  U_i-U_id V_i \right)+\frac{\rho^2}{\Delta_\theta} d \theta^2 + g_{\varphi\varphi}d \varphi_i^2\, ,
\end{align}
 where $i\in\{-,+,c\}$, and where
\begin{align*}
f_1^{i}=&\frac{a^2 \sin^2 \theta \Delta_\theta G_i^2(r+r_i)^2}{4\kappa_i^2\chi^2\rho^2(r_i^2+a^2)^2}\, ,\\
f_{2}^{i} =& \frac{a^{2} \sin^{2} \theta (r_i+r) G_{i}^{2} \Delta_r}{ 4 \kappa_ {i}^{2} \chi^{2} \rho^{2} (r_{i}^{2}+a^{2}) (r^{2}+a^{2}) (r-r_{i}) } \left( \frac{\rho_{i}^{2}}{r_i^2+a^2} + \frac{\rho^2}{r^2+a^2} \right)\, ,\\
f_3^{i}=& \frac{G_i\Delta_r}{2\kappa_i^2\chi^2\rho^2(r-r_i)}\left(\frac{\rho_i^4}{(r_i^2+a^2)^2} + \frac{\rho^4}{(r^2+a^2)^2} \right)\, ,\\
f_4^{i}=& s_i \frac{a \sin^2 \theta G_i}{\kappa_i\chi^2\rho^2(r_i^2+a^2)} \left[ (r^2+a^2)(r+r_i)\Delta_\theta +\frac{\rho_i^2 \Delta_r }{r-r_i}\right]\, .
\end{align*}
Here, $r(U,V)$ is implicitly defined by the equation
\begin{align*}
\frac{r-r_+}{U_+V_+}=G(r)=-e^{-2\kappa_+r}(r-r_-)^{\tfrac{r_-}{r_+}}
\end{align*}
for $\lambda=0$, and by
\begin{align}
G_i(r)=\frac{r-r_i}{U_iV_i}=-s_i\prod_{j\neq i}\abs{r-r_j}^{-\tfrac{s_i\kappa_i}{s_j\kappa_j}}\, , 
\end{align}
for $\lambda>0$, where $i,j$ run over the four roots of $\Delta_r$ and $s_i=\sgn(\partial_r\Delta_r(r_i))$. 
\begin{align*}
\kappa_i=\frac{\abs{\partial_r\Delta_r(r_i)}}{2\chi(r_i^2+a^2)}
\end{align*} 
is the surface gravity of the horizon at $r=r_i$.

Let us also define the function
\begin{align}
r_*(r)=(r+)\sum\limits_{i}\frac{s_i}{2\kappa_i}\log\abs{r-r_i}\, ,
\end{align}
where the term in the brackets is only present for $\lambda=0$.
To embed $\MI$ into $\M+$ and $\Mc$ while maintaining the time orientation, one can use the coordinate transformations
\begin{align*}
U_i=-s_ie^{-s_i\kappa_i (t-r_*)}\, , \quad V_i=s_ie^{s_i\kappa_i(t+r_*)}\, , \quad \varphi_i=\varphi-\frac{a}{r_i^2+a^2}t\,,\quad i\in\{+,c\} .
\end{align*}
In this way, $\MI$ is identified with $\M+\cap\{U_+<0,V_+>0\}$ and $\Mc\cap\{U_c>0,V_c<0\}$. In a similar way, $\MII$ can be identified with $\M+\cap\{U_+>0,V_+>0\}$ or $\Mm\cap\{U_-<0,V_-<0\}$ by setting
\begin{align*}
U_i&=s_ie^{-s_i\kappa_i (t-r_*)}\, , \quad V_i=s_ie^{s_i\kappa_i(t+r_*)}\, , \quad \varphi_i=\varphi-\frac{a}{r_i^2+a^2}t\, , \quad i\in\{-,+\} .
\end{align*}
$\MIII$ can be embedded into $\Mc\cap\{U_c>0,V_c>0\}$ via
\begin{align*}
U_c=e^{\kappa_c (t-r_*)}\, , \quad V_c=e^{-\kappa_c(t+r_*)}\, , \quad \varphi_c=\varphi-\frac{a}{r_c^2+a^2}t\, .
\end{align*}

The Kruskal extension $\M+$ contains the event horizon of the black hole, $\sH_+^R:=\{U_+=0,V_+>0\}$ and the long horizons $\sH_+:=\{V_+=0\}$. Together with the set $\{U_+=0, V_+<0\}$, these make up the bifurcate Killing horizon $\{r=r_+\}\subset \M+$, which is generated by the Killing vector field 
\begin{align}
v_+=\kappa_+\left(V_+\partial_{V_+}-U_+\partial_{U_+}\right)=\partial_t+\tfrac{a}{r_+^2+a^2}\partial_\varphi\, ,
\end{align}
where the last expression is valid in any of the Boyer-Lindquist blocks that can be embedded into $\M+$.

Similarly, for $\lambda>0$, the Kruskal extension $M_c$ encompasses the cosmological horizon $\sH_c^L:=\{U_c>0, V_c=0\}$ and the long horizon $\sH_c:=\{U_c=0\}$. They are part of the bifurcate Killing horizon $\{r=r_c\}\subset \Mc$ generated by the Killing field
\begin{align}
v_c=\kappa_c\left(U_c\partial_{U_c}-V_c\partial_{V_c}\right)=\partial_t+\tfrac{a}{r_c^2+a^2}\partial_\varphi\, .
\end{align}

The physical Kerr(-de Sitter) spacetime will be defined as
\begin{align*}
\cM=\MII\cup\sH_+^R\cup\MI\left(\cup\sH_c^L\cup\MIII\right)\, .
\end{align*}
For $\lambda=0$, it is a submanifold of $\M+$, and for $\lambda>0$ it can be isometrically embedded into 
\begin{align*}
\tilde \cM= \left(\M+\sqcup \Mc\right)/\sim\,, 
\end{align*}
 where $\sim$ is the identification of $\{U_+<0, V_+>0\}\subset \M+$ with $\{U_c>0, V_c<0\}\subset \Mc$ via the embedding of $\MI$ described above.
Considering $\cM$ as a submanifold of the maximal analytic extension of Kerr(-de Sitter), the inner  horizon $\sH_-:=\{V_-=0, U_-<0\}$ is part of the future boundary of $\cM$.\footnote{The reason we do not consider $\{U_-=0, V_-<0\}$, the other part of the future boundary, is that the former is expected to remain present in the case of gravitational collapse to the corresponding black hole. Therefore, it is considered to be of larger physical significance.}

\subsection{The null geodesics and the trapped set}
\label{sec:null geos}
 
An important part in the analysis of the singularity structure of states is played by null geodesics. Studies of the null geodesic structure can, for example, be found in \cite{oN, Chandra} for Kerr and in \cite{SZ, HLKK, Dyatlov, thesis} for Kerr-de Sitter. A vital aspect for the application in this work is the behaviour of (partially) trapped null geodesics.

An inextendible, affinely parametrised null geodesics $\gamma:(\tau_-,\tau_+)\to \cM$ is called forwards (backwards) trapped if it satisfies $\tau_\pm=\pm\infty$ and if there exists constants $r_-<r_0<R_0<\infty$ such that $r_0<r(\gamma(\tau))<R_0$ for all $\pm \tau>0$. By studying the geodesic equation, it can be shown that a forward or backward trapped geodesic in $\cM$ either lies completely in $\sH_+^R$ or $\sH_c^L$, or satisfies $r_+<r_0<R_0<r_c$ (after an affine rescaling if necessary).
 One can show that inextendible null geodesics on $\cM$ which are neither forwards nor backwards trapped must approach either $\sH_+$ or $\sH_c$ towards the past.
 
We will refer to the set of forwards or backwards trapped null geodesics for which $r_+<r_0<R_0<r_c$ as $\Gamma^\pm$, and we set $\Gamma=\Gamma^+\cup\Gamma^-$. The intersection of these two sets is the trapped set $K=\Gamma^+\cap\Gamma^-$.

The trapped set of Kerr(-de Sitter) has been studied in \cite{Dyatlov, DZ, PV1, PV2}. By relating affine geodesics to points in $T^*\cM$, $K$ can be viewed as a subset of $T^*\MI$. There exists a $\lambda_0>0$ so that for all $0\leq \lambda<\lambda_0$, $K$ can be specified in Boyer-Lindquist coordinates by \cite{Dyatlov}
\begin{equation}
\label{eq:trapped set}
K=\{G=\partial_rG_r=k_r=0, k\neq 0\}\, .
\end{equation}
Here,
\begin{align}
\label{eq:null cond}
G(x,k)=\rho^{2}g^{-1}_x(k,k)\, ,
\end{align}
so that $G(x,k)=0$ ensures that $(x,k)$ is lightlike, and 
\begin{align}
\label{eq:G_r}
G_r=\Delta_r k_r^2-\frac{\chi^2}{\Delta_r}\left((r^2+a^2)k_t+ak_\varphi\right)^2\, .
\end{align}
$\lambda_0$ can be chosen as 
\begin{align}
\label{eq:lambda_0}
\lambda_0=\min\{\lambda: \Delta_r(r_m)=a^2, \partial_r\Delta_r(r_m)=0, r_+<r_m<r_c\}\, .
\end{align}
 This choice guarantees that no $(x,k)$ with $k(\partial_t)=0$ are contained in $K$, see e.g. \cite{Dyatlov, Klein}.

Note that while the Boyer-Lindquist coordinates do not cover the rotation axis $\{\sin\theta=0\}$, the above expressions smoothly extend to the axis under an appropriate coordinate change \cite{Dyatlov}.

In the same way, $\Gamma^\pm$ can also be viewed as subsets of $T^*\MI$. For $(x,k)\in T^*\MI$, let $(\hat x,\hat k)$ denote the projection to $T^*(\rr_t\times\ss^2_{\theta,\varphi})$. If $0\leq\lambda<\lambda_0$, $\Gamma^\pm$ can be specified by \cite{Dyatlov}
\begin{align}
\label{eq:Gamma pm}
\Gamma^\pm=&\left\{(r,\hat x, k_r,\hat k)\in T^*\MI\vert \exists \tilde r(\hat x,\hat k):(\tilde r,\hat x,0,\hat k)\in K;\right.\\\nonumber
&\left. k_r=\pm \sgn(r-\tilde r)\sqrt{\tfrac{-G(r, \hat x, 0,\hat k)}{\Delta_r}}
\right\}
\end{align}
In the above, if $\tilde r$ exists, it is the unique solution to $-G(\tilde r,\hat x, 0,\hat k)=0$ for the given $\hat x$ and $\hat k$.

\section{The Unruh state on Kerr-de Sitter}
\label{sec:Unruh}
In this section, we will give a brief summary of the construction of the Unruh state in \cite{Klein} and show how the condition of small angular momentum $a$ can be removed by a more detailed analysis of the trapped set $K$.

\subsection{The free scalar field theory}
The Unruh state in \cite{Klein} has been constructed for the quantum field theory of a free, massive scalar field, governed by the Klein-Gordon equation
\begin{align*}
\left[\Box_g-m^2\right]\phi=0\, .
\end{align*} 
Here $\Box_g$ is the d'Alembertian for the metric $g$ introduced above, and $m>0$ is a constant, which can be identified as the mass of the scalar field.

Since the spacetimes $\tilde\cM$ and $\cM$ are globally hyperbolic \cite[Prop. 2.3]{Klein}, there exist unique advanced and retarded Green operators $E^\pm:C_0^\infty(\cM)\to C^\infty(\cM)$ and $\tilde{E}^\pm: C^\infty_0(\tilde \cM)\to C^\infty(\tilde \cM)$ for the Klein-Gordon operator on $\cM$ and $\tilde \cM$. Moreover, since $\cM$ can be isometrically embedded into $\tilde \cM$ while preserving the orientation, time orientation, and causal structure, the operators $\tilde{E}^\pm$ restricted to $C^\infty_0(\cM)$ agree with $E^\pm$. The difference between the retarded and advanced Green operator, $E=E^+-E^-$, is called the Pauli-Jordan propagator. 

There are various ways in which the algebra of observables for this theory can be formulated. Here, we choose the CCR-algebra $\cA$, the quotient of the free unital $*$-algebra generated by the smeared field operators $\Phi(f)$, $f\in C^\infty_0(\cM)$, with respect to the ideal defined by the relations
\begin{align*}
\Phi(\alpha f+h)&=\alpha \Phi(f)+\Phi(h)\\
 \Phi([\Box_g-m^2]f)&=0\\
 \left[\Phi(f),\Phi(h)\right]&=i\int\limits_\cM f E(h)\dVol_g 1\!\!1 \\ 
 \left(\Phi(f)\right)^*&=\Phi(\bar{f})\, ,
\end{align*}
for all $f,h\in C_0^\infty(\cM)$ and $\alpha\in\cc$.

A state on $\cA$ is a positive, normalised, linear map $\omega:\cA\to\cc$. It is quasi-free if it is completely determined by its two-point function $\omega(\Phi(f)\Phi(h))$. Consequently, a linear map $w:C_0^\infty(\cM)\times C_0^\infty(\cM)\to \cc$ defines a quasi-free state on $\cA$ if it satisfies
\begin{subequations}
\label{eq:conds on 2pt fct}
\begin{align}
w([\Box_g-m^2]f,h)&=w(f,[\Box_g-m^2]h)=0 \\
 w(\bar f, f)&\geq 0 \\
w(f,h)-w(h,f)&=i\int\limits_\cM f E(h)\dVol_g\, .
\end{align}
\end{subequations}
As usual, we will also require the two-point function $w$ to be a bi-distribution.

In addition, for a state to be physical, we demand that it satisfy the Hadamard property. For a quasi-free state on $\cA$, this can be formulated in terms of the wavefront set of the two-point function as the microlocal spectrum condition,
\begin{align*}
\WF'(w)&=\cC^+\\
\cC^+&=\{(x,k;y,l)\in T^*(\cM\times\cM):(x,k)\sim(y,l); k\triangleright 0\}\, .
\end{align*}
Here, $(x,k)\sim(y,l)$ means that $x$ and $y$ are connected by a null geodesic to which $k$ is cotangent at $x$ and $l$ cotangent at $y$. $k\triangleright 0$ indicates that $k$ is future directed. The primed wavefront set is the set
\begin{align*}
\WF'(w)=\{(x,k;y,l):(x,k;y,-l)\in \WF(w)\}\, .
\end{align*}
This property implies that the two-point function of any Hadamard state is of the form $w=w_0+H$, where $w_0$ is a smooth function on $\cM\times\cM$ and $H$, the Hadamard parametrix, is determined only by the theory and the local geometry.

\subsection{The Unruh state}
The Unruh state was originally constructed for the wave equation on Schwarzschild spacetime in the context of studying black-hole evaporation \cite{U}. It was later shown to be a well-defined Hadamard state by \cite{DMP}.
Extending the underlying idea to rotating black holes, the Unruh state for the Klein-Gordon equation on Kerr-de Sitter was defined in \cite{Klein} as follows.

\begin{definition}
\label{def:Unruh}
Let $f,h\in C_0^\infty(\cM)$. The Unruh state $\omega_U$ for the Klein-Gordon equation on Kerr-de Sitter is defined as the state specified by the two-point function
\begin{align}
\label{eq:def w}
w(f,h)&:=w_+(f,h)+w_c(f,h)\\\nonumber
&:=A_+(\tr_+ \tilde{E}(f), \tr_+ \tilde{E}(h))+A_c(\tr_c \tilde{E}(f), tr_c \tilde{E}(h))\, ,
\end{align}
where $\tr_i:C^\infty(\tilde \cM)\to \cD'(\sH_i)$, $\phi\mapsto \phi\vert_{\sH_i}$ is the trace map to the horizon $\sH_i$ and the bi-distributions $A_i:C_0^\infty(\sH_i)\times C_0^\infty(\sH_i)\to \cc$, for $i\in \{+,c\}$, are given by
\begin{align}
\label{eq:A_i}
A_i(F,H)=-\lim\limits_{\epsilon\to 0} \frac{r_i^2+a^2}{\chi\pi}\int\limits_{\rr^2\times \ss^2_{\theta,\varphi_i}}\frac{F(L_i,\theta,\varphi_i)H(L'_i,\theta,\varphi_i)}{(L_i-L'_i-i\epsilon)^2}dL_idL'_id\Omega_i^2\,.
\end{align}
Here, $L_+=U_+$, $L_c=V_c$, and $d\Omega_i^2$ is the standard volume element of the unit two-sphere $\ss^2_{\theta,\varphi_i}$. 
\end{definition}

The first important result of \cite{Klein} was to show that
\begin{proposition}
\label{prop:well-def}
The bi-linear map $w$ in Definition \ref{def:Unruh} is well-defined and constitutes the two-point function of a quasi-free state on $\cA$.
\end{proposition}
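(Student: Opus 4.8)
The plan is to treat the two claims in turn: first that $w=w_++w_c$ is a well-defined bi-distribution on $C_0^\infty(\cM)\times C_0^\infty(\cM)$, and then that it satisfies the three requirements \eqref{eq:conds on 2pt fct} of the two-point function of a quasi-free state. Both parts rest on the same two ingredients: a Fourier-space representation of the boundary kernels $A_i$ that makes their analytic structure transparent, and the bulk-to-boundary map $f\mapsto(\tr_+\tilde E(f),\tr_c\tilde E(f))$, whose continuity and symplectic properties encode \eqref{eq:conds on 2pt fct}.

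For well-definedness, identify each long horizon with a product $\sH_i\cong\rr_{L_i}\times\ss^2_{\theta,\varphi_i}$ and introduce the space $\cG_i$ of functions $F$ on $\sH_i$ whose partial Fourier transform $\widehat F(k,\theta,\varphi_i)$ in $L_i$ satisfies $\int_{\rr\times\ss^2}(1+|k|)\,|\widehat F(k,\theta,\varphi_i)|^2\,dk\,d\Omega_i^2<\infty$, with the induced Hilbert norm. Writing $-\tfrac1\pi(L_i-L_i'-i\epsilon)^{-2}$ as the boundary value of a function holomorphic in the lower half-plane and carrying out the $L_i$-integration on the Fourier side, one checks that for $F,H\in\cG_i$ the limit $\epsilon\to0$ in \eqref{eq:A_i} exists and equals, with the positive constant $c_i=\tfrac{r_i^2+a^2}{\chi\pi}$,
\begin{align*}
A_i(F,H)=c_i\int\limits_0^\infty\!\int\limits_{\ss^2_{\theta,\varphi_i}}k\,\widehat F(k,\theta,\varphi_i)\,\widehat H(-k,\theta,\varphi_i)\,dk\,d\Omega_i^2\, ,
\end{align*}
which is manifestly a separately continuous bilinear form on $\cG_i\times\cG_i$. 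The substantial step is then the lemma that $f\mapsto\tr_i\tilde E(f)$ maps $C_0^\infty(\cM)$ continuously into $\cG_i$. Here I would use global hyperbolicity of $\tilde\cM$ \cite[Prop. 2.3]{Klein} to propagate $\tilde E(f)$ smoothly onto all of $\sH_+$ and $\sH_c$, and then energy estimates for $[\Box_g-m^2]$ built from the Killing fields $v_+,v_c$ near the horizons — together with well-posedness of the characteristic Cauchy problem — to establish the required square-integrability, including one $L_i$-derivative, of $\tr_i\tilde E(f)$; the crux is that, $\supp f$ being compact but its causal shadow on a null hypersurface not, one must quantify the fall-off of $\tr_i\tilde E(f)$ as $L_i\to\pm\infty$ along the generators, uniformly across the bifurcation spheres and the axis $\{\sin\theta=0\}$. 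Continuity of $\tr_i\circ\tilde E$ and of $A_i$ then shows that $w$ is a bi-distribution, following the pattern of \cite{DMP} for Schwarzschild.

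Granting the lemma, the conditions \eqref{eq:conds on 2pt fct} follow. The bi-solution property is immediate since $\tilde E\circ[\Box_g-m^2]=0$, hence $\tr_i\tilde E([\Box_g-m^2]f)=0$ and $w([\Box_g-m^2]f,h)=w(f,[\Box_g-m^2]h)=0$. For positivity, the Green operators are real, so $\tr_i\tilde E(\bar f)=\overline{\tr_i\tilde E(f)}$, and with $F=\tr_i\tilde E(f)$ the Fourier formula gives $w_i(\bar f,f)=c_i\int_0^\infty\!\int_{\ss^2}k\,|\widehat F(-k,\theta,\varphi_i)|^2\,dk\,d\Omega_i^2\ge0$, whence $w(\bar f,f)=w_+(\bar f,f)+w_c(\bar f,f)\ge0$. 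For the commutator, from $\tfrac1{(x-i0)^2}-\tfrac1{(x+i0)^2}=-2\pi i\,\delta'(x)$ one finds that the antisymmetric part of $A_i$ collapses to a multiple of the characteristic symplectic form, $A_i(F,H)-A_i(H,F)=\tfrac{i(r_i^2+a^2)}{\chi}\int_{\sH_i}(F\,\partial_{L_i}H-H\,\partial_{L_i}F)\,dL_i\,d\Omega_i^2$. It then remains to show that, for $u=\tilde E(f)$ and $v=\tilde E(h)$, the sum over $i$ of these boundary symplectic products equals $\int_\cM f\,E(h)\,\dVol_g$: the Klein--Gordon symplectic product of $u$ and $v$ over any spacelike Cauchy surface of $\cM$ equals $\int_\cM f\,E(h)\,\dVol_g$, and Green's identity deforms that surface down to $\sH_+\cup\sH_c$ — using that $u,v$ solve $[\Box_g-m^2]u=[\Box_g-m^2]v=0$ and that the contributions at the bifurcation spheres and at $L_i\to\pm\infty$ vanish by the fall-off from the lemma — turning it into the sum of the characteristic symplectic products on $\sH_+$ and $\sH_c$, with constants matching those above. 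Hence $w(f,h)-w(h,f)=i\int_\cM f\,E(h)\,\dVol_g$, and $w$ is the two-point function of a quasi-free state on $\cA$.

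The main obstacle is the regularity-and-decay lemma of the second paragraph: on a null hypersurface compactness of $\supp f$ does not localise $\tr_i\tilde E(f)$, so genuine estimates are needed for the behaviour of the characteristic data along the horizon generators and near the bifurcation spheres. Once $\tr_i\tilde E(f)\in\cG_i$ with continuous dependence on $f$ is established, positivity is a formal consequence of the Fourier representation and the commutator identity is the standard symplectic-form argument sketched above.
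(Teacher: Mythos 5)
Your overall architecture matches the paper's: a Fourier representation of $A_i$ for positivity, the bi-solution property from $\tilde{E}\circ[\Box_g-m^2]=0$, and a deformation of the Cauchy-surface symplectic form onto $\sH_+\cup\sH_c$ for the commutator. The formal computations (the $\eta\,d\eta$ Fourier kernel, the $\delta'$ identity for the antisymmetric part) are consistent with \eqref{eq:Fourier-version w_i} and the paper's treatment. The problem is the key lemma you correctly isolate as the crux, on two counts.

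First, the method you propose for it would fail on Kerr--de Sitter. Energy estimates built from the Killing fields $v_+$, $v_c$ do not yield decay of $\tr_i\tilde{E}(f)$ along the horizon generators, because neither field is globally timelike on $\MI$ (each has an ergoregion), so the associated energies are not positive definite and superradiance obstructs any elementary Gr\"onwall/red-shift argument. This is precisely why the Schwarzschild strategy of \cite{DMP} does not transplant directly. The paper instead imports the quantitative decay \eqref{eq:decay horizons}--\eqref{eq:decay MI} from the resonance-expansion results of \cite{HV}, which are valid only under the assumption of mode stability and a positive spectral gap $\alpha>0$ --- a hypothesis your proposal never invokes but which is essential to the statement. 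Second, your target space $\cG_i$, with weight $(1+|k|)$, forces $\tr_i\tilde{E}(f)\in L^2(\sH_i)$ in the $L_i$-variable; the available decay is only $\abs{\tilde{E}(f)}\lesssim\abs{L_i}^{-\alpha/\kappa_i}$, which is not square integrable at $L_i\to-\infty$ unless $\alpha>\kappa_i/2$. So membership in $\cG_i$ can fail even when the construction goes through. The paper avoids this by working in the completion $\overline{C_0^\infty(\sH_i)}^{A_i}$ (a homogeneous $\dot H^{1/2}$-type space controlled by $\int_0^\infty\eta\,\abs{\widehat F}^2\,d\eta$ alone), splitting off a compactly supported piece and handling the tail on $\sH_i^-$ via the logarithmic coordinate $l_i=-\kappa_i^{-1}\log\abs{L_i}$, under which the polynomial decay becomes exponential. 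With these two repairs --- replacing the energy-estimate input by the \cite{HV} decay under mode stability, and relaxing $\cG_i$ to the $A_i$-completion --- your argument aligns with the paper's.
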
 

\begin{remark}
A key to showing Proposition \ref{prop:well-def} are decay results for solutions to the Klein-Gordon equation on Kerr-de Sitter provided by \cite{HV}. As also analysed in \cite{HWZ}, the results of \cite{HV} entail that close to the horizons $\sH_i$, $i\in\{+,c\}$,
\begin{align}
\label{eq:decay horizons}
\abs{\partial_{L_i}^n \tilde{E}(f)}\lesssim C \abs{L_i}^{-(n+\alpha/\kappa_i)}\norm{f}_{C^m}\, ,
\end{align}
where $m$ depends on $n$ and the constant $C$ depends on $n$ and $\supp f$. The $C^m$-norm is defined as
\begin{align*}
\norm{f}_{C^m}=\max\limits_{\abs{\beta}\leq m}\sup\limits_{x\in K}\abs{\left(\prod\limits_jV_j^{\beta_j}\right)f(x)}\, ,
\end{align*}
where $K\subset \cM$ is a compact region containing $\supp f$, $V_j$, $j\in\{1,\dots,4\}$ are linearly independent, smooth vector fields on $K$ and $\beta$ is a multi-index. The constant $\alpha$ is the spectral gap of the massive Klein-Gordon equation. 

In the region $\{r_++\delta<r<r_c-\delta\}\subset\MI$, the results of \cite{HV} provide the estimate
\begin{align}
\label{eq:decay MI}
\abs{\partial^n \tilde{E}(f)}\lesssim C e^{\alpha t}\norm{f}_{C^m}\, , \quad \partial\in \{\partial_t,\partial_r,\partial_\theta,\partial_\varphi\}\, .
\end{align}

These decay results are sufficient for the purpose of showing the well-definedness of the Unruh state if the spectral gap is (strictly) positive, i.e., $\alpha>0$. While this is expected to hold for all subextreme Kerr-de Sitter spacetimes, it has only been shown for small $\abs{a}$ \cite{Dyatlov:QNM} or small $\lambda$ \cite{Hintz:QNM}. In the rest of this work, we will always work under the assumption of mode stability and a positive spectral gap.
\end{remark}

\begin{proof}
In the following, we will roughly sketch the proof of Proposition \ref{prop:well-def}.

Below, let $f, h\in C_0^\infty(\cM)$. In a first step, we show that $A_i$ acting on $\tr_i \tilde{E}(f)\otimes \tr_i\tilde{E}(h)$ leads to a well-defined, finite expression, despite $\tr_i \tilde{E}(f)$ not being compactly supported on $\sH_i$. 

To do so, we fix some $\epsilon>0$ and consider $A_{i,\epsilon}$, which is given by \eqref{eq:A_i} without the limit of $\epsilon\to 0$. We then aim to estimate the absolute value of 
\begin{align*}
A_{i,\epsilon}(\tr_i\tilde{E}(f),\tr_i\tilde{E}(h))\, .
\end{align*}
First, we use the decay estimates \eqref{eq:decay horizons} to justify a partial integration in $L_i$ and $L'_i$. Then, we split the integration region into different domains on which \eqref{eq:decay horizons} and known results on the sequence $\log (y-i\epsilon)$ of $L^1_{loc}$-functions can be applied to find an upper bound of the form 
\begin{align*}
\abs{A_{i,\epsilon}(\tr_i\tilde{E}(f),\tr_i\tilde{E}(h))}\leq C \norm{f}_{C^m}\norm{h}_{C^m}\, 
\end{align*}
 for some $m\in \nn$. Importantly, the bound is uniform in $\epsilon$, and the constant $C$ depends only on $\supp f \cup \supp h$. This shows that 
the combination $A_i(\tr_i \tilde{E}(\cdot), \tr_i\tilde{E}(\cdot))$ is a continuous map from $C_0^\infty(\cM)\times C_0^\infty(\cM)$ to $\cc$, i.e., a bi-distribution.

Next, since the kernel of $\tilde{E}$ restricted to $C_0^\infty(\cM)$ is $[\Box_g-m^2]C_0^\infty(\cM)$, $w(f,h)$ defined in \eqref{eq:def w} satisfies the first condition in \eqref{eq:conds on 2pt fct}.

To show the second property in \eqref{eq:conds on 2pt fct}, the positivity, we rewrite $w_i(f,h)$ using Fourier transform in $L_i$. More specifically, let 
\begin{align*}
\mu_i=\frac{2\eta (r_i^2+a^2)}{\chi}d \eta d\Omega_i^2\, ,
\end{align*}
be a measure on $\rr_+\times\ss^2$. Given any function $\psi\in C^\infty(\rr;\rr)$ satisfying $\psi(x)=1$ for $x>x_+$, $\psi(x)=0$ for $x<x_-$, and $x_-<x_+<0$, define the map $K_i: C_0^\infty(\cM)\to L^2(\rr_+\times \ss^2, \mu_i)$  as
\begin{align*}
K_i(f):=\mathbf{1}_{\rr_+}(\eta)\cF_i(\psi \tr_i(\tilde{E}(f)))+\mathbf{1}_{\rr_+}(\eta)\cF_i((1-\psi) \tr_i(\tilde{E}(f)))\, , \\
\cF_i(F)(\eta,\theta,\varphi):=(2\pi)^{-\tfrac{1}{2}}\int\limits_\rr e^{i\eta L_i}F(L_i,\theta,\varphi_i)dL_i\, .
\end{align*}
Then one has
\begin{align}
\label{eq:Fourier-version w_i}
w_i(f,h)=\IP{K_i(\bar{f}), K_i(h)}_{L^2(\rr_+\times\ss^2,\, \mu_i)}\, .
\end{align}
To show that the right-hand side of \eqref{eq:Fourier-version w_i} is well-defined and that \eqref{eq:Fourier-version w_i} holds, one first checks by explicit computation that the map $\mathbf{1}_{\rr_+}(\eta)\circ \cF $ is an isometric embedding of $(C_0^\infty(\sH_i), A_i(\bar{\cdot},\cdot))$ into $L^2(\rr_+\times\ss^2; \mu_i)$, which extends by linearity and continuity to all of $\overline{(C_0^\infty(\sH_i), A_i(\bar{\cdot},\cdot))}$. With some additional work, one can show that the extended map is indeed an isomorphism \cite{DMP}. 

Next, one needs to show that $\tr_i\tilde{E}(f)$ can be identified with an element of $\overline{C_0^\infty(\sH_i)}^{A_i}$ for all $f\in C_0^\infty(\cM)$. Following \cite{DMP}, this can be done by separating $\tr_i\tilde{E}(f)$ into a compactly supported, smooth piece and a piece supported on $\sH_i^-:=\sH_i\cap\{L_i<0\}$, with support disjoint from $\{L_i=0\}$.  For the piece supported on $\sH_i^-$, the result follows from a combination of the decay results \eqref{eq:decay horizons} with a coordinate transform to $l_i=-\kappa_i^{-1}\log\abs{L_i}$.

Finally, it remains to show the third property in \eqref{eq:conds on 2pt fct}. By a standard result, see e.g. \cite{Dimock}, one can write 
\begin{align*}
\int\limits_\cM f E(h) \dVol_g=\int\limits_\Sigma \left[E(f)\nabla_aE(h)-E(h)\nabla_a E(f)\right]n_\Sigma^a\dVol_\gamma\, ,
\end{align*}
where $\Sigma$ is a Cauchy surface of $\cM$ with future-directed unit normal $n_\Sigma$ and induced metric $\gamma$. This is independent of the choice of Cauchy surface, since the current $j_a=E(f)\nabla_aE(h)-E(h)\nabla_a E(f)$ is conserved, $\nabla_aj^a=0$. Again, making use of the decay results \eqref{eq:decay horizons}, and in addition  \eqref{eq:decay MI}, one can evaluate this on a sequence of Cauchy surfaces approaching $\sH_+\cup i^-\cup \sH_c$, and show that there is no contribution coming from $i^-$. Comparing the current evaluated on $\sH_+\cup\sH_c$ to the antisymmetric part of $w$ then gives the desired result upon partial integration.
\qeds
\end{proof}

\begin{remark}
A byproduct of the proof of the second property is that for $f,h\in C_0^\infty(\MI)$, the two-point function can be written as
\begin{align}
\label{eq:Fourier-version w_i I}
w_i(f,h)=\IP{K_{i,I}(\bar{f}), K_{i,I}(h)}_{L^2(\rr\times\ss^2, \mu_{i,I})}\, ,\\
\mu_{i,I}=\frac{(r_i^2+a^2)}{\chi}\frac{\xi e^{\pi\xi/\kappa_i}}{\sinh (\pi\xi/\kappa_i)} d \xi d\Omega_i^2\, ,
\end{align}
and $K_{i,I}: C_0^\infty(\MI)\to L^2(\rr\times \ss^2, \mu_{i,I})$ given by
\begin{align*}
K_{i,I}(f):=(2\pi)^{-\tfrac{1}{2}}\int\limits_\rr e^{i\xi l_i}(\tr_i\tilde{E})(f)(l_i,\theta,\varphi_i)dl_i\, .
\end{align*}
\end{remark}

While the proof of Proposition \ref{prop:well-def} could be extended from Schwarzschild \cite{DMP} to Kerr-de Sitter in a relatively straightforward manner with the help of the decay results of \cite{HV}, the proof of the Hadamard property in \cite{DMP} relies on the presence of a global timelike Killing field in the black hole exterior, which is absent in the case of Kerr-de Sitter.
Nonetheless, it was shown in \cite{Klein}

\begin{theorem}
\label{thm:Hadamard}
There exists an $a_0>0$ and a $\lambda_0>0$ so that under the assumption of mode stability, the Unruh state for the Klein-Gordon field on Kerr-de Sitter is Hadamard for all $0<\abs{a}<a_0$ and all $0<\lambda_0$.
\end{theorem}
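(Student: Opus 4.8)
The microlocal spectrum condition $\WF'(w)=\cC^+$ has an easy half and a hard half, and the easy half is a reduction. Since $w$ already satisfies the third condition in \eqref{eq:conds on 2pt fct}, i.e.\ $w(x,y)-w(y,x)=iE(x,y)$ as kernels, and since on $\cM$, which is globally hyperbolic \cite{Klein}, the Pauli-Jordan propagator has the sharp wavefront set $\WF'(E)=\cC^+\cup\cC^-$ with $\cC^-$ the past-directed analogue of $\cC^+$ and $\cC^+\cap\cC^-=\emptyset$, the equality follows as soon as the single inclusion $\WF'(w)\subseteq\cC^+$ is established: Hermiticity of the two-point function gives $\WF'({}^{t}w)=-\WF'(w)\subseteq\cC^-$, so $\cC^+\subseteq\WF'(E)\subseteq\WF'(w)\cup\WF'({}^{t}w)$ forces $\cC^+\subseteq\WF'(w)$. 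Everything therefore reduces to propagating the singularities of the boundary two-point functions $A_+,A_c$ into the bulk and checking that the result has purely positive frequency.

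From the explicit kernel \eqref{eq:A_i}, each $A_i$ is, up to a smooth nonvanishing density, the tensor product along the null generators of $\sH_i$ of the one-dimensional vacuum kernel $\lim_{\epsilon\to0}(L_i-L_i'-i\epsilon)^{-2}$ with the Dirac measure in the transverse variables $(\theta,\varphi_i)$, so $\WF'(A_i)$ sits over the generators of $\sH_i$ carried by covectors with equal, strictly positive momentum conjugate to the affine parameter $L_i$ (which increases towards the future) and vanishing transverse momentum; this is exactly the input from \cite{DMP} and is insensitive to $a$ and $\lambda$. Now write $w=\sum_{i\in\{+,c\}}T_i^{\,*}A_iT_i$, where $T_i$ denotes the restriction to $\sH_i$, in its first slot, of the kernel of $\tilde E$. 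Because $\tilde E(f)$ solves the Klein-Gordon equation, $\WF(\tilde E(f))\subseteq\cN$ and is invariant under the null bicharacteristic flow; together with the wavefront calculus for restriction to $\sH_i$ this identifies $T_i$, microlocally away from covectors tangent to $\sH_i$ and away from directions tangent to backward-trapped geodesics, with a Fourier integral operator whose canonical relation pairs a covector $(x,k)$ over $\cM$ with the point of $\sH_i$ hit by the past-inextendible null geodesic through $(x,k)$ and the pull-back of $k$ there. By the dichotomy recalled in Section~\ref{sec:null geos} — a non-backward-trapped inextendible null geodesic in $\cM$ reaches $\sH_+$ or $\sH_c$ in the past — this relation is well defined off the trapped directions, the wavefront-composition theorem applies there, and it yields that over non-trapped geodesics $\WF'(w)$ pairs only $(x,k)\sim(y,l)$ lying on a common null geodesic, with $k$ future directed: the bicharacteristic flow preserves the time orientation of null covectors, and the $i\epsilon$-prescription in \eqref{eq:A_i} is tuned precisely so that positive $L_i$-frequency on $\sH_i$ corresponds to the future-directed null covectors entering $\MI$. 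Hence $\WF'(w)\subseteq\cC^+$ over the non-trapped directions.

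What remains, and what is the crux of the theorem, is the behaviour of $\WF'(w)$ over the trapped directions $\Gamma^{\pm}$ and, in the limit, over the trapped set $K$ itself: there the composition of canonical relations degenerates — the hitting point on $\sH_i$ runs off to infinity as the covector tends to a trapped direction — so the wavefront-composition theorem no longer applies directly and a separate microlocal argument is required. This is the step that forces $|a|$ to be small in the present formulation. For $|a|$ and $\lambda$ small, the description \eqref{eq:trapped set}--\eqref{eq:Gamma pm} presents $K$ as a smooth conic submanifold of $T^*\MI$, a thin perturbation of the Schwarzschild photon sphere $\{r=3\}$, on which $k(\partial_t)$ is uniformly nonzero and which is normally hyperbolic with $\Gamma^{\pm}$ its stable/unstable bundles \cite{Dyatlov, DZ}. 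One combines this structure with the bulk estimate \eqref{eq:decay MI}, which — under the standing assumption of mode stability and a positive spectral gap — controls $\tilde E(f)$ and all its derivatives uniformly on the compact $r$-range containing $K$, to run a propagation-of-singularities argument of radial-estimate type at the trapped set: any putative element of $\WF'(w)$ over a trapped geodesic is carried by the flow either out to $\sH_+\cup\sH_c$, contradicting the positive-frequency structure just established, or into the infinite past along $K$, where the uniform decay of $\tilde E(f)$ kills the amplitude extracted in the $\epsilon\to0$ limit defining the $A_i$. Combined with the non-trapped analysis this gives $\WF'(w)\subseteq\cC^+$, hence $\WF'(w)=\cC^+$, so $\omega_U$ is Hadamard. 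The near-Schwarzschild geometry is what makes this last step manageable for small $|a|$; the degeneracies of $K$ at larger angular momentum are precisely what the more detailed analysis of $K$ and $\Gamma^{\pm}$ — the extension of \cite{KHa} from Kerr to Kerr-de Sitter, carried out in the sequel — is designed to overcome.
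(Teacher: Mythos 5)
Your overall skeleton — reduce to one inclusion $\WF'(w)\subseteq\cC^+$, handle non-trapped directions by composing the positive-frequency boundary kernels $A_i$ with the restriction of $\tilde E$ to $\sH_i$, and isolate the trapped directions as the crux where the smallness of $\abs{a}$ enters — matches the architecture of the paper's proof. (The paper reduces further, via positivity, Cauchy--Schwarz and propagation of singularities, to controlling only the diagonal $\WF'(w)\cap\Delta_{T^*\cM\times T^*\cM}$, which is a cleaner target than the full wavefront set, but your reduction is also standard and serviceable.)

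The genuine gap is your treatment of the trapped directions, which is precisely the step the theorem turns on. Your proposed mechanism — a ``radial-estimate type'' propagation argument at the normally hyperbolic trapped set in which the uniform decay \eqref{eq:decay MI} ``kills the amplitude'' — does not work as stated. First, for a geodesic in $\Gamma^-\setminus K$ (backward trapped, not forward trapped) there is no way to propagate a putative singularity back to $\sH_+\cup\sH_c$, so your first alternative is unavailable by definition of $\Gamma$; and your second alternative conflates temporal decay of solutions with absence of wavefront set at a fixed point, which is not a valid microlocal inference — indeed $w$ \emph{must} be singular over trapped directions (the positive-frequency part of $\cC^+$ lives there too), so no argument can show the amplitude vanishes; what must be shown is that only the future-directed covectors survive. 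The paper's actual key idea, which your proposal does not contain, is that each piece $w_i$ restricted to $C_0^\infty(\MI)\times C_0^\infty(\MI)$ satisfies a KMS analyticity property at inverse temperature $2\pi\kappa_i^{-1}$ with respect to the flow of the Killing field $v_i$ (read off from the representation \eqref{eq:Fourier-version w_i I} together with \eqref{eq:decay horizons}); running the passivity argument of Strohmaier--Verch \cite{SV} separately for $w_+$ and $w_c$ then excludes diagonal points $(x,k;x,k)$ with $k(v_i)<0$ from $\WF'(w_i)$. The smallness of $\abs{a}$ and $\lambda$ is then used for a specific geometric fact — every bicharacteristic in $\Gamma$ meets a region where \emph{both} $v_+$ and $v_c$ are future-directed timelike, so that $k(v_i)<0$ forces $(x,k)\in\cN^-$ — not, as you suggest, because the trapped set is a ``thin perturbation of the photon sphere'' making some propagation estimate manageable. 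Without the KMS/passivity input and this geometric statement about $v_\pm$ along $\Gamma$, the trapped case is not closed.
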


\begin{proof}
Again, we give a rough sketch of the proof. First, due to the first property in \eqref{eq:conds on 2pt fct}, the Propagation of Singularities Theorem (in the version of \cite[Lemma 6.5.5]{DH}) tells us that $WF'(w)\subset \cN\times\cN$ and allows us to propagate regularity (and singularities) along the bicharacteristics of $[\Box_g-m^2]$, in other words along the lift of null geodesics to $T^*\cM$. Moreover, let us note that due to the positivity of the two-point function and an application of the Cauchy-Schwarz theorem, together with Propagation of Singularities, it suffices to show that 
\begin{align*}
\WF'(w)\cap \Delta_{T^*\cM\times T^*\cM}\subset \cN^+\times\cN^+\, .
\end{align*}
Here $\Delta_{T^*\cM\times T^*\cM}$ is the diagonal in $T^*\cM\times T^*\cM$.

The proof of the Hadamard property can now be split into two cases, null geodesics ending at $\sH_+\cup\sH_c$ and forwards or backwards trapped geodesics in $\Gamma$ that do not approach $\sH_+\cup\sH_c$.

The first case can be dealt with by using the explicit form \eqref{eq:def w} of the two-point function. Focussing on a particular bicharacteristic, one can use cutoff functions to single out a relevant piece and compute its wavefront set directly. For this case, it then remains to show that the bicharacteristic under consideration does not appear in the wavefront set of the remainder terms.
In \cite{Klein}, this was handled by working on $\tilde{\cM}$ and making use of various properties of the Pauli-Jordan propagator $\tilde{E}$ viewed as a map from $C_0^\infty(\tilde \cM)/[\Box_g-m^2]C_0^\infty(\tilde \cM)$ to solutions to the Klein-Gordon equation. This was combined with a further decomposition using cutoff functions, an analysis of the support properties of the various pieces, and the decay results from \cite{HV} to show the desired result.

To handle the second case for the Unruh state on Schwarzschild, the authors of \cite{DMP} observed that the two pieces of their two-point function, when restricted to $C_0^\infty(\MI)\times C_0^\infty(\MI)$, satisfy the analyticity properties of a KMS/ ground state with respect to a globally timelike Killing field in $\MI$. This observation allowed the authors to conclude the Hadamard property in the same way as for passive states \cite{SV}.

In the case of Kerr-de Sitter, there is no globally timelike Killing vector field on $\MI$, and this is not possible. However, using the formulation \eqref{eq:Fourier-version w_i I} of $w$, one can show that each of the two pieces $w_i$ satisfies a KMS analyticity property with temperature $2\pi \kappa_i^{-1}$, albeit with respect to two different Killing vector fields. Concretely, the piece $w_i$ obeys a KMS-type property with respect to $v_i$. This follows from direct inspection of \eqref{eq:Fourier-version w_i I}, again making use of the decay estimates \eqref{eq:decay horizons}.

The KMS-like property allows us to follow the proof of \cite[Theorem 5.1]{SV} for $w_+$ and $w_c$ separately. In this way, one can show that points of the form $(x,k;x,k)$ with $x\in \MI$ and $k$ such that $k(v_i)<0$ cannot be contained in $\WF'(w_i)$.

By analysing the null geodesics on Kerr-de Sitter, one realises that there are some $0<a_0$ and $0<\lambda_0$, so that for any $0<\abs{a}<a_0$ and $0<\lambda<\lambda_0$, any bicharacteristic in $\Gamma$ intersects a region where both $v_+$ and $v_c$ are future-directed timelike. As a result, $k(v_i)<0$ entails that $(x,k)\in \cN^-$. Taken together with the fact that $\WF'(w_i)\subset \cN$, the above result then entails for the case of trapped bicharacteristics
\begin{align*}
\WF'(w_i)\cap \Delta_{\Gamma\times \Gamma}\subset \cN^+\times\cN^+\, ,
\end{align*}
as desired. Together with the first case, this concludes the proof of the Hadamard property.
\qeds
\end{proof}

\subsection{The Hadamard property for large $a$}
A crucial takeaway of the above is that apart from the issue of mode stability, the assumption of small angular momentum only enters in the very last part of the proof. 

The last part of the proof combined the ideas from \cite{DMP} with those of \cite{GHW}. In the latter work, the authors constructed the Unruh state for free, massless fermions and proved its Hadamard property on subextreme Kerr spacetimes with sufficiently small $a$. In fact, in this work, the smallness condition for the angular momentum enters in the same way as above: The two-point function is split into two separate pieces corresponding to $w_c$ and $w_+$. It is shown that points of the form $(x,k;x,-k)$ with $x\in \MI$ and  $k(\partial_t)<0$ cannot be in the wavefront set of the first piece, while those with $k(v_+)<0$ cannot be in the wavefront set of the second piece. An analysis of the trapped set of Kerr spacetime equivalent to the analysis above then allows the authors to conclude the Hadamard property, given that $\abs{a}$ is sufficiently small.

However, a more detailed analysis of the trapped set in this case reveals that the smallness condition on $\abs{a}$ can be circumvented.
This was elaborated for the fermionic Unruh state on Kerr in \cite{KHa}. Since the argument presented there is purely geometrical, it should also hold for other theories on the same spacetime. In fact, the key geometric result of \cite{KHa} can also be generalised to Kerr-de Sitter.

\begin{proposition}
\label{prop:fb trapped set}
Let $(\cM,g)$ be the physical part of a subextreme Kerr- or Kerr-de Sitter spacetime with cosmological constant $\lambda$ and angular momentum per unit mass $a$. Then there exists a $0<\lambda_0$, so that for all $0\leq\lambda<\lambda_0$, if the geodesic corresponding to $(x,k)\in T^*\cM$ lies in $\Gamma$ and satisfies $k(v_+)\geq 0$ or $k(v_c)\geq 0$, then $(x,k)\in \cN^+$.
\end{proposition}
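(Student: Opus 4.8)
\emph{Proof plan.} The plan is to peel off the geodesic data along a trapped orbit and reduce the whole statement to a pair of elementary inequalities for the polynomial $\Delta_r$, the way \cite{KHa} does for Kerr. Write $P(r):=(r^2+a^2)k_t+ak_\varphi$ with $k_t=k(\partial_t)$, $k_\varphi=k(\partial_\varphi)$ the conserved momenta; then $k(v_+)=P(r_+)/(r_+^2+a^2)$ and $k(v_c)=P(r_c)/(r_c^2+a^2)$, so these pairings are constant along the geodesic with the signs of $P(r_+)$, $P(r_c)$. Since a non-zero null covector lies in $\cN^+$ or in $\cN^-$, replacing $k$ by $-k$ shows the proposition is equivalent to the implication: \emph{if the geodesic of $(x,k)$ lies in $\Gamma$ and $k\in\cN^+$, then $P(r_+)>0$ and $P(r_c)>0$} (for Kerr only the first, there being no $v_c$). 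By \eqref{eq:Gamma pm}, any geodesic in $\Gamma$ shares its $k_t,k_\varphi$ — hence the numbers $k(v_+),k(v_c)$ — with an associated trapped geodesic in $K$; and since future-directedness is preserved along the flow and the sign of $k(-\nabla t)$ on $\MI$ depends only on $k_t,k_\varphi$ and the base point, that associated geodesic is again in $\cN^+$. So it suffices to prove the implication for $(x,k)\in K$. I fix $\lambda_0$ no larger than the value \eqref{eq:lambda_0}, so that \eqref{eq:trapped set}--\eqref{eq:Gamma pm} hold and $k_t\neq0$ on $K$, and I use the separation $G=G_r+G_\theta$, a direct computation from \eqref{eq:null cond} and \eqref{eq:G_r}, with $G_r$ as in \eqref{eq:G_r} and $G_\theta=\Delta_\theta k_\theta^2+\tfrac{\chi^2}{\Delta_\theta\sin^2\theta}Q^2\geq0$, $Q:=a\sin^2\theta\,k_t+k_\varphi$.

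\emph{Step 1: the sign of $P$ at the trapped radius.} Fix $(x,k)\in K$ and let $\tilde r\in(r_+,r_c)$ be the (constant) radius of its geodesic. From $k_r=0$ and $G=0$ one gets $G_\theta=-G_r=\chi^2P(\tilde r)^2/\Delta_r(\tilde r)$, hence $Q^2\le\Delta_\theta\sin^2\theta\,P(\tilde r)^2/\Delta_r(\tilde r)$. If $P(\tilde r)=0$ then $G_\theta=0$, forcing $k_\theta=0$ and $Q=0$, i.e.\ $k_\varphi=-a\sin^2\theta\,k_t$, whence $P(\tilde r)=k_t\rho^2\neq0$ since $k_t\neq0$ — a contradiction; thus $P(\tilde r)\neq0$. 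Now $k\in\cN^+$ means $k(-\nabla t)>0$, and on $\MI$ one has $k(-\nabla t)\propto\sigma^2k_t+a\mu k_\varphi=(\tilde r^2+a^2)\Delta_\theta P(\tilde r)-a\Delta_r Q$ with positive proportionality factor; if $P(\tilde r)<0$, inserting the $Q$-bound yields $(\tilde r^2+a^2)^2\Delta_\theta<a^2\sin^2\theta\,\Delta_r$, i.e.\ $\sigma^2(\tilde r,\theta)<0$, which is impossible. Hence $P(\tilde r)>0$.

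\emph{Step 2: reduction to $\Delta_r$.} Still on $K$, the remaining defining condition $\partial_rG_r=0$ (with $k_r=0$) reads $2P'(\tilde r)\Delta_r(\tilde r)=P(\tilde r)\Delta_r'(\tilde r)$; since $P(\tilde r)\neq0$ and $P'(\tilde r)=2\tilde r k_t$, this gives $k_t=P(\tilde r)\Delta_r'(\tilde r)/(4\tilde r\Delta_r(\tilde r))$. Substituting into $P(r_+)=P(\tilde r)-(\tilde r^2-r_+^2)k_t$ and $P(r_c)=P(\tilde r)+(r_c^2-\tilde r^2)k_t$, and using $P(\tilde r)>0$, $\tilde r>0$, $\Delta_r(\tilde r)>0$,
\begin{align*}
P(r_+)>0\iff\psi_+(\tilde r)>0,\qquad P(r_c)>0\iff\psi_c(\tilde r)>0,
\end{align*}
where $\psi_+(r):=4r\Delta_r(r)-(r^2-r_+^2)\Delta_r'(r)$ and $\psi_c(r):=4r\Delta_r(r)+(r_c^2-r^2)\Delta_r'(r)$.

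\emph{Step 3 and the main obstacle.} It then remains to prove $\psi_+,\psi_c>0$ on $(r_+,r_c)$. Both are polynomials of degree $\leq4$ (the degree-$5$ terms cancel), both vanish at one horizon endpoint with one-sided derivative making them positive just inside, and both are strictly positive at the other endpoint. For $\lambda=0$ (Kerr, $r_c=\infty$, only $\psi_+$ relevant) one has the explicit factorisation $\psi_+(r)=2(r-r_+)\bigl(r^2+(r_+-3)r+r_+\bigr)$, whose quadratic factor is positive for $r\geq r_+$ because $1<r_+<2$; this is the Kerr input of \cite{KHa}. For $\lambda>0$ I would argue by perturbation: $\psi_+$ and $\psi_c$ converge to positive Kerr expressions on compact $r$-ranges as $\lambda\to0$, while near $r_c$ they remain positive by the endpoint analysis, so a sufficiently small threshold $\lambda_0$ (possibly below Dyatlov's) forces positivity on all of $(r_+,r_c)$. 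Making this last point precise — upgrading the explicit Kerr computation of \cite{KHa} to one robust under switching on $\Lambda$, \emph{uniformly in $r$ as $r_c\to\infty$}, with a quantitative $\lambda_0$ — is the only genuinely technical part of the argument; Steps 1--2 are algebraic identities for the separated geodesic data that hold verbatim for every $\lambda\geq0$, and the geometric facts that trapped null geodesics remain in $(r_+,r_c)$ and that $K$ avoids $\{k_t=0\}$ are already supplied by \cite{Dyatlov} for $\lambda<\lambda_0$.
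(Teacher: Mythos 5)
Your Steps 1 and 2 are correct and land exactly where the paper does: Step 1 reproduces the content of Proposition \ref{prop:K props} (on $K$, future-directedness together with the Carter separation forces $k_t\bigl((r^2+a^2)k_t+ak_\varphi\bigr)>0$, hence $k_t>0$), and Step 2 converts the remaining trapping condition $\partial_rG_r=0$ into the sign of $\psi_+(r)=4r\Delta_r-(r^2-r_+^2)\partial_r\Delta_r$, which is precisely the polynomial $P(r)$ of Lemma \ref{lemma:Npm K overlap}. Two remarks on the set-up. First, the passage from $\Gamma$ to $K$ requires passing to the limiting trapped orbit, not merely invariance of time-orientation along one geodesic; this is the same continuity argument the paper uses implicitly, so it is fine but should be stated as such. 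Second, your polynomial $\psi_c$ is redundant: on $K\cap\cN^+$ one has $k_t>0$, so $P(r)=(r^2+a^2)k_t+ak_\varphi$ is increasing in $r$ and $P(r_+)>0$ already gives $P(r_c)>0$ (equivalently, $\psi_c=\psi_++(r_c^2-r_+^2)\partial_r\Delta_r>\psi_+$ at trapped radii since $\partial_r\Delta_r>0$ there). The paper exploits the same monotonicity, in the contrapositive form, to reduce the hypothesis ``$k(v_+)\geq 0$ or $k(v_c)\geq 0$'' to the single condition $k(v_+)\geq 0$, so only one inequality ever needs to be proved.

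The genuine gap is Step 3, which you yourself flag as incomplete: for $\lambda>0$ the positivity of $\psi_+$ on $(r_+,r_c)$ is the whole point of the generalisation from Kerr to Kerr--de Sitter, and the perturbation strategy you sketch is not carried out; it would also face exactly the uniformity problem you identify (the interval $(r_+,r_c)$ is not compact uniformly in $\lambda$, so ``convergence on compact $r$-ranges'' plus a local endpoint analysis at $r_c$ does not cover the intermediate region), and at best it would deliver an unquantified, possibly much smaller $\lambda_0$. The paper closes this step directly and for the entire subextreme range: writing $\Delta_r=-\lambda\prod_{i=0}^{3}(r-r_i)$ with $r_0=-(r_-+r_++r_c)$ and expanding $(r^2-r_+^2)\partial_r\Delta_r$ root by root, the $i=+$ term contributes $-(r+r_+)\Delta_r$, the term containing $(r-r_c)(r-r_0)$ equals $-(r+r_+)\tfrac{r-r_+}{r-r_-}\Delta_r$ and is bounded below by $-(r+r_+)\Delta_r$ because $0<\tfrac{r-r_+}{r-r_-}<1$, and what remains gives
\begin{align*}
\psi_+(r)>2(r-r_+)\Delta_r+\lambda(r+r_+)(r-r_+)^2(r-r_-)(2r+r_-+r_+)>0
\end{align*}
on $(r_+,r_c)$. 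Hence no smallness of $\lambda$ beyond the threshold \eqref{eq:lambda_0} needed to characterise $K$ enters at this stage; to complete your argument you should replace the perturbation sketch by this (or an equivalent explicit) computation, and you may drop $\psi_c$ altogether.
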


Before proving this proposition, we first show some intermediate results. 
In Lemma \ref{lemma:future past directed}, we characterise future- and past-directed null covectors. Since the key to analysing the set $\Gamma$ is the trapped set $K$, we use this characterisation and the results in \cite{Dyatlov} to show that $(x, k)\in K$ is future-directed if the $t$-component of $k$ is positive. In Lemma~\ref{lemma:Npm K overlap}, we show that for $(x, k)\in K$ the property $k(v_+)\geq 0$ or $k(v_c)\geq 0$ implies $k_t>0$ and therefore that $k$ is future directed. To prove Proposition~\ref{prop:fb trapped set}, we utilise these results and the connection between the trapped set $K$ and $\Gamma$. 

As noted in Section~\ref{sec:null geos}, $\Gamma$ can be viewed as a subset of $T^*\MI$. We will thus assume that $(x,k)\in T^*\MI$ and work in Boyer-Lindquist coordinates.
Note that a null covector $(x,k)\in T^*\MI$ can be determined by $x$ and $(k_r,k_\theta,k_\varphi)\in \rr^3$ by setting
\begin{align}
\label{eq:xit null cond}
k_t=-\frac{\mu}{\sigma^2}a k_\varphi\pm\sqrt{\frac{\Delta_r\Delta_\theta}{\sigma^2}\left[\frac{\rho^4}{\sigma^2\sin^2\theta}k_\varphi^2+\frac{\Delta_\theta}{\chi^2}k_\theta^2+\frac{\Delta_r}{\chi^2}k_r^2\right]}\, .
\end{align}

From this, one obtains that 
\begin{lemma}
\label{lemma:future past directed}
$(x,k)$ is future directed (past directed) if one chooses the upper (lower) sign in \eqref{eq:xit null cond}.
\end{lemma}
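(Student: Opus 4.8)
The plan is to pair the covector $k$ from \eqref{eq:xit null cond} with the vector field that fixes the time orientation of $\MI$ and simply read off a sign. By the convention adopted above, a causal covector $k\in T^*_x\MI$ is future directed exactly when $k(-\nabla t)>0$, where
\begin{align*}
-\nabla t=\frac{\chi^2\sigma^2}{\rho^2\Delta_\theta\Delta_r}\left(\partial_t+\frac{a\mu}{\sigma^2}\partial_\varphi\right)\, .
\end{align*}
Writing $k=k_t\,\d t+k_r\,\d r+k_\theta\,\d\theta+k_\varphi\,\d\varphi$ and evaluating on this vector field yields
\begin{align*}
k(-\nabla t)=\frac{\chi^2\sigma^2}{\rho^2\Delta_\theta\Delta_r}\left(k_t+\frac{a\mu}{\sigma^2}k_\varphi\right)\, .
\end{align*}

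Next I would substitute the expression \eqref{eq:xit null cond} for $k_t$. The summand $-\frac{a\mu}{\sigma^2}k_\varphi$ occurring there is cancelled precisely by the $\frac{a\mu}{\sigma^2}k_\varphi$ produced above, leaving
\begin{align*}
k(-\nabla t)=\pm\frac{\chi^2\sigma^2}{\rho^2\Delta_\theta\Delta_r}\sqrt{\frac{\Delta_r\Delta_\theta}{\sigma^2}\left[\frac{\rho^4}{\sigma^2\sin^2\theta}k_\varphi^2+\frac{\Delta_\theta}{\chi^2}k_\theta^2+\frac{\Delta_r}{\chi^2}k_r^2\right]}\, ,
\end{align*}
with the same sign as the one chosen in \eqref{eq:xit null cond}. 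On $\MI$ the function $\Delta_r$ is strictly positive — this is the region between the event horizon and the cosmological horizon, respectively the exterior of the event horizon when $\lambda=0$ — while $\chi,\Delta_\theta,\rho^2$ and $\sigma^2$ are positive throughout any Boyer-Lindquist block. Hence the prefactor $\chi^2\sigma^2/(\rho^2\Delta_\theta\Delta_r)$ is strictly positive and the radicand is nonnegative, so $\sgn\bigl(k(-\nabla t)\bigr)$ coincides with the sign selected in \eqref{eq:xit null cond}. The radicand vanishes only if $k_r=k_\theta=k_\varphi=0$, in which case \eqref{eq:xit null cond} forces $k_t=0$ too and $k=0$, which is not a causal covector; for every nonzero null $k$ the radicand is therefore strictly positive. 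Thus the upper sign in \eqref{eq:xit null cond} gives $k(-\nabla t)>0$, i.e. $(x,k)$ future directed, and the lower sign gives $k(-\nabla t)<0$, i.e. $(x,k)$ past directed.

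No serious obstacle arises once \eqref{eq:xit null cond} is in hand. The only point requiring care is the claim implicit in \eqref{eq:xit null cond}, namely that it genuinely parametrizes the null covectors over $\MI$: this amounts to solving $\rho^2 g^{-1}_x(k,k)=0$ as a quadratic equation in $k_t$, using the inverse metric in Boyer-Lindquist coordinates, and completing the square. In doing so one encounters the $k_\varphi^2$-coefficient $\tfrac{a^2\mu^2}{\sigma^4}+\tfrac{\Delta_r-a^2\sin^2\theta\,\Delta_\theta}{\sigma^2\sin^2\theta}$, and the algebraic identity
\begin{align*}
a^2\sin^2\theta\,\mu^2+\sigma^2\bigl(\Delta_r-a^2\sin^2\theta\,\Delta_\theta\bigr)=\Delta_r\Delta_\theta\rho^4\, ,
\end{align*}
which follows from $\sigma^2=(r^2+a^2)^2\Delta_\theta-a^2\sin^2\theta\,\Delta_r$, $\mu=(r^2+a^2)\Delta_\theta-\Delta_r$ and $\rho^2=(r^2+a^2)-a^2\sin^2\theta$, brings this coefficient to the form $\tfrac{\Delta_r\Delta_\theta\rho^4}{\sigma^4\sin^2\theta}$ appearing under the square root; the $k_\theta^2$- and $k_r^2$-coefficients match directly. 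Granting that identity, the lemma is a one-line sign count as above.
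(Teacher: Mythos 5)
Your proposal is correct and follows essentially the same route as the paper: evaluate $k(-\nabla t)$ on the covector, observe that the $\tfrac{a\mu}{\sigma^2}k_\varphi$ term cancels against the first summand of \eqref{eq:xit null cond}, and read off the sign from the positive prefactor and the square root. Your additional verification of the algebraic identity behind \eqref{eq:xit null cond} and of the non-vanishing of the radicand for $k\neq 0$ is a welcome bit of extra care but does not change the argument.
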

\begin{proof}
By our definition, $(x,k)$ is future (past) directed if and only if $k(-\nabla t)$ is bigger (smaller) than zero, see also \cite[Lemma 6.3]{KHa} and \cite{Dyatlov}. It then follows immediately that for $(x,k)$ with $x\in\MI$ and $k=(k_t,k_r,k_\theta,k_\varphi)$ satisfying \eqref{eq:xit null cond}, one has
\begin{align*}
k(-\nabla t)=&\frac{\chi^2\sigma^2}{\rho^2\Delta_\theta\Delta_r}\left(k_t+\frac{\mu}{\sigma^2}ak_\varphi\right)\\
=&\pm \frac{\chi^2\sigma^2}{\rho^2\Delta_\theta\Delta_r}\sqrt{\frac{\Delta_r\Delta_\theta}{\sigma^2}\left[\frac{\rho^4}{\sigma^2\sin^2\theta}k_\varphi^2+\frac{\Delta_\theta}{\chi^2}k_\theta^2+\frac{\Delta_r}{\chi^2}k_r^2\right]}\, ,
\end{align*}
so $\pm k(-\nabla t)>0$ if and only if one chooses the plus (minus) sign in \eqref{eq:xit null cond}.
\qeds
\end{proof}

Next, we recall that there exists a $0<\lambda_0$, so that for $0\leq \lambda<\lambda_0$, the trapped set in Kerr(-de Sitter) can be characterised by \cite{Dyatlov}
\begin{align*}
K=\{(x,k)\in T^*\MI:G=\partial_r G_r=k_r=0,\, \, k\neq 0\}\subset T^*\MI\, ,
\end{align*}
where $G$ and $G_r$ are given in \eqref{eq:null cond} and \eqref{eq:G_r} respectively.

Note that $\partial_rG_r=0$, $ k\neq 0$ is equivalent to (see \cite{Dyatlov})
\begin{align}
\label{eq:drGr=0}
4r\Delta_r k_t-\left((r^2+a^2) k_t+a^2 k_\varphi\right)\Delta_r'=0\, .
\end{align}
From now on, we restrict to $0\leq \lambda <\lambda_0$ as described in \eqref{eq:lambda_0}. Recall that this condition ensures that no $(x,k)$ with $k_t=0$ are contained in $K$. 
 In this case, the trapped set $K$ has the following properties \cite{Dyatlov} 
 \begin{proposition}
\label{prop:K props}
The trapped set on Kerr(-de Sitter) with sufficiently small $\lambda$ exhibits the following properties:
\begin{enumerate}
\item On $K$, one has $\partial_r\Delta_r>0$ . This also implies that on $K$, \begin{align*}
     k_t\left((r^2+a^2) k_t+a k_\varphi\right)>0\, .
\end{align*}
\item $K\cap\{ \pm k_t>0\}\subset \cN^\pm$.
\end{enumerate}
\end{proposition}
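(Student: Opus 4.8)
\emph{Proof proposal.} I would work in Boyer--Lindquist coordinates on $T^*\MI$, where $K$ sits, write $P:=(r^2+a^2)k_t+ak_\varphi$, and use the Carter decomposition $G=G_r+G_\theta$ with $G_r=\Delta_r k_r^2-\tfrac{\chi^2}{\Delta_r}P^2$ (cf. \eqref{eq:G_r}) and $G_\theta=\tfrac{\chi^2}{\Delta_\theta}\big(a\sin\theta\,k_t+\tfrac{k_\varphi}{\sin\theta}\big)^2+\Delta_\theta k_\theta^2\ge0$ depending only on $(\theta,k_\theta,k_t,k_\varphi)$. On $K$ we have $k_r=0$, so $G=0$ becomes $\chi^2P^2=\Delta_r\,G_\theta$ and $\partial_rG_r=0$ becomes $4r\,\Delta_r k_t=P\,\partial_r\Delta_r$ (cf. \eqref{eq:drGr=0}). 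Note $P\neq0$ on $K$: if $P=0$ then $G_\theta=0$, forcing $k_\theta=0$ and $k_\varphi=-a\sin^2\theta\,k_t$, whence $P=\rho^2k_t$ and so $k_t=0$ and $k=0$. Dividing, $\partial_r\Delta_r(r)=\tfrac{4r\Delta_r(r)}{P(r)}k_t$, hence $k_tP\cdot\partial_r\Delta_r=4r\,k_t^2\Delta_r>0$ on $K$ (since $r>0$, $\Delta_r>0$ on $\MI$ and $k_t\neq0$ by the choice \eqref{eq:lambda_0}). So the two assertions in (1), $\partial_r\Delta_r>0$ and $k_t\big((r^2+a^2)k_t+ak_\varphi\big)=k_tP>0$, are equivalent, and it is enough to prove $\partial_r\Delta_r>0$ on $K$; for $\lambda=0$ this is immediate because $\partial_r\Delta_r=2r-2$ while $\MI$ has $r>r_+=1+\sqrt{1-a^2}>1$.

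The substance of (1) is the case $0<\lambda<\lambda_0$, which I would argue as in \cite{Dyatlov} and which is the main obstacle. Set $Q:=G_\theta>0$ (positive since $P\neq0$, $\Delta_r>0$) and consider the radial potential $\mathcal R(s):=\chi^2P(s)^2-Q\,\Delta_r(s)$. The two relations above say exactly that $\mathcal R(r)=\mathcal R'(r)=0$, i.e. $r$ is a double zero of $\mathcal R$ in $(r_+,r_c)$, while $\mathcal R(r_+)=\chi^2P(r_+)^2\ge0$ and $\mathcal R(r_c)=\chi^2P(r_c)^2\ge0$. The point of \eqref{eq:lambda_0} is that the maximum of $\Delta_r$ over $(r_+,r_c)$, attained at some $r_m$, then satisfies $\Delta_r(r_m)>a^2$; one has to show that this barrier is too high for $\mathcal R$ to acquire a double zero at or beyond $r_m$, so that every trapped radius obeys $r<r_m$ and hence $\partial_r\Delta_r(r)>0$. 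This is the genuinely quantitative step --- a comparison of $\mathcal R,\mathcal R',\mathcal R''$ at the double zero with the defining relations $\partial_r\Delta_r(r_m)=0$, $\Delta_r(r_m)>a^2$ --- and is precisely what \cite{Dyatlov} supplies (and what forces the smallness of $\lambda$).

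Granting (1), item (2) is a short computation. For $(x,k)\in K$ --- null with $k_r=0$ --- one has $(x,k)\in\cN^+$ iff $k(-\nabla t)>0$, and $k(-\nabla t)=\tfrac{\chi^2}{\rho^2\Delta_\theta\Delta_r}\big(\sigma^2k_t+\mu\,ak_\varphi\big)$ with positive prefactor on $\MI$. Using $\sigma^2=(r^2+a^2)^2\Delta_\theta-a^2\sin^2\theta\,\Delta_r$, $\mu=(r^2+a^2)\Delta_\theta-\Delta_r$ and $ak_\varphi=P-(r^2+a^2)k_t$, one obtains
\[
\sigma^2k_t+\mu\,ak_\varphi=\Delta_\theta(r^2+a^2)P-\Delta_r\big(P-\rho^2k_t\big)=\mu\,P+\Delta_r\,\rho^2k_t\, .
\]
On $\MI$ both $\mu$ and $\Delta_r\rho^2$ are positive, and by (1) $P$ has the sign of $k_t$; therefore $\sigma^2k_t+\mu\,ak_\varphi$ has the sign of $k_t$, i.e. $(x,k)\in\cN^{\sgn(k_t)}$, which is exactly $K\cap\{\pm k_t>0\}\subset\cN^\pm$.
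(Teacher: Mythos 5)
Your proposal is correct and reaches both conclusions, but it takes a genuinely different route from the paper in two places, so let me compare. For item (1) at $\lambda=0$ you argue exactly as the paper does: multiply the trapping relation $4r\Delta_r k_t=P\,\partial_r\Delta_r$ by $k_t$ and use $r>0$, $\Delta_r>0$ on $\MI$ and $k_t\neq 0$; your verification that $P\neq 0$ on $K$ via the Carter decomposition is a nice explicit justification of a step the paper leaves implicit. For $\lambda>0$, however, the paper does \emph{not} analyse double zeros of the radial potential $\mathcal{R}$; it uses a soft perturbation argument: $K$ is contained in a compact subset of $\MI$ depending continuously on $\lambda$ (citing \cite{Dyatlov}), $\partial_r\Delta_r>0$ there at $\lambda=0$, hence also for all sufficiently small $\lambda$. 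Your quantitative comparison of $\mathcal{R}$, $\mathcal{R}'$ with the barrier $\Delta_r(r_m)>a^2$ is a legitimate alternative and, if carried out, would yield an explicit threshold rather than an unspecified smallness condition; but as written it only states what must be shown and defers it to \cite{Dyatlov} --- about the same level of completeness as the paper's one-sentence perturbation argument, so I would not call it a gap, though you should check that \cite{Dyatlov} really supplies that precise estimate rather than the compactness/continuity statement the paper relies on. For item (2) your identity $\sigma^2k_t+\mu\,ak_\varphi=\mu P+\Delta_r\rho^2 k_t$ (resting on $\sigma^2-\mu(r^2+a^2)=\Delta_r\rho^2$, which the paper also records) gives a direct proof that $\sgn k(-\nabla t)=\sgn k_t$ on $K$ once (1) fixes $\sgn P=\sgn k_t$; the paper instead argues by contradiction, feeding a past-directed covector into the square-root formula \eqref{eq:xit null cond} to obtain the bound \eqref{eq:est} and contradict point (1). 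Your version is shorter, avoids Lemma \ref{lemma:future past directed} altogether, and is correct; what the paper's version buys is reuse of the estimate \eqref{eq:est}, which it needs again verbatim in the proof of Proposition \ref{prop:fb trapped set}.
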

\begin{proof}
In the Kerr case, one has $\partial_r\Delta_r>0$ on all of $\MI$. Therefore, the first point follows for $\lambda=0$ after multiplying \eqref{eq:drGr=0} by $k_t$, taking into account that $k_t\neq 0$. For $\lambda\neq 0$, the equation follows by a perturbation argument combined with the fact that $K$ is contained in a compact subset of $\cM$ that continuously depends on $\lambda$ \cite{Dyatlov}, as well as the fact that $k_t\neq 0$ on $K$ for all $\lambda<\lambda_0$. To show the second point, let $k_t>0$ and assume that $(x,k)\in K$ is past-directed. Then by Lemma \ref{lemma:future past directed}, $k_t$ must be given by the lower sign in \eqref{eq:xit null cond}. Since $k_t>0$, this requires $a k_\varphi<0$ and leads to the upper bound 
\begin{align}
\label{eq:est}
k_t\leq \abs{\frac{\mu}{\sigma^2}a k_\varphi}\leq \abs{\frac{ak_\varphi}{r^2+a^2}}\leq \frac{\abs{ak_\varphi}}{r_+^2+a^2}\, ,
\end{align}
which follows from  
\begin{align}
\frac{1}{r^2+a^2}-\frac{\mu}{\sigma^2}=\frac{\Delta_r\rho^2}{\sigma^2(r^2+a^2)}\, .
\end{align}
Hence
\begin{align*}
k_t((r^2+a^2)k_t+ak_\varphi)<0\, ,
\end{align*}
in contradiction to the first point. The argument works in the same way for $k_t<0$ and $(x,k)$ future-directed.
\qeds
\end{proof}
With this, one can prove
\begin{lemma}
\label{lemma:Npm K overlap}
Let $ i\in\{+,c\}$, and $0\leq\lambda<\lambda_0$. Then
 $K\cap\{k_t<0\}\cap\{k(v_i)\geq 0\}=\emptyset$ .
 \end{lemma}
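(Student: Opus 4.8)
The plan is to fix $(x,k)\in K$ with $k_t<0$ and show directly that $k(v_i)<0$, which is precisely the asserted emptiness. We work, as in the rest of the section, in Boyer--Lindquist coordinates on $\MI$, which contains $\Gamma\supset K$. Since $v_i=\partial_t+\tfrac{a}{r_i^2+a^2}\partial_\varphi$ and $r_i^2+a^2>0$, the inequality $k(v_i)<0$ is equivalent to $(r_i^2+a^2)k_t+ak_\varphi<0$. On $K$ we have $k_r=0$, $\partial_r\Delta_r>0$, and (dividing the second inequality of Proposition~\ref{prop:K props}(1) by $k_t<0$) also $(r^2+a^2)k_t+ak_\varphi<0$. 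The case $i=c$ is then immediate: on $\MI$ one has $r<r_c$, so, using $k_t<0$,
\[
(r_c^2+a^2)k_t+ak_\varphi=\big[(r^2+a^2)k_t+ak_\varphi\big]+(r_c^2-r^2)k_t<0,
\]
both summands being negative.

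The case $i=+$ is the crux, because $r>r_+$ makes the same comparison go the wrong way. Here I would bring in the remaining defining relation of $K$, namely $\partial_rG_r=0$ together with $k_r=0$, that is \eqref{eq:drGr=0}. Using $\partial_r\Delta_r>0$ and $(r^2+a^2)k_t+ak_\varphi\neq0$ on $K$ (both from Proposition~\ref{prop:K props}(1)), \eqref{eq:drGr=0} rearranges to
\[
(r^2+a^2)k_t+ak_\varphi=\frac{4r\Delta_r}{\partial_r\Delta_r}\,k_t\qquad\text{on }K.
\]
Substituting this into $(r_+^2+a^2)k_t+ak_\varphi=\big[(r^2+a^2)k_t+ak_\varphi\big]-(r^2-r_+^2)k_t$ gives
\[
(r_+^2+a^2)k_t+ak_\varphi=k_t\left(\frac{4r\Delta_r}{\partial_r\Delta_r}-(r^2-r_+^2)\right),
\]
so, since $k_t<0$ and $\partial_r\Delta_r>0$ on $K$, the lemma reduces to the purely radial inequality $4r\Delta_r-(r^2-r_+^2)\,\partial_r\Delta_r>0$ for $r_+<r<r_c$.

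For $\lambda=0$ this is transparent: with $\Delta_r=r^2-2r+a^2$, a short computation using $\Delta_r(r_+)=0$ gives $4r\Delta_r-(r^2-r_+^2)\partial_r\Delta_r=2(r-r_+)\big(r^2+(r_+-3)r+r_+\big)$, and the quadratic factor has discriminant $(r_+-1)(r_+-9)$, which is strictly negative for every subextreme Kerr black hole ($1<r_+\le2$); hence the quadratic is everywhere positive and the whole expression is positive for $r>r_+$. For $0<\lambda<\lambda_0$, the left-hand side is still a cubic polynomial in $r$ with positive leading coefficient that vanishes at $r=r_+$; factoring out $(r-r_+)$ leaves a monic quadratic whose coefficients depend continuously on $\lambda$ (through $r_+=r_+(\lambda)$) and which at $\lambda=0$ has strictly negative discriminant, so for $\lambda$ small the discriminant remains negative, the quadratic remains positive, and the cubic is again positive on $(r_+,r_c)$. (Alternatively, one may invoke that $K$ lies in a compact subset of $\{r_+<r<r_c\}$ depending continuously on $\lambda$ \cite{Dyatlov} and run the $\lambda=0$ estimate there.) Shrinking $\lambda_0$ if necessary, this finishes the $i=+$ case, and with it the lemma.

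The only genuine difficulty is this radial positivity for $i=+$: unlike for $i=c$, one cannot avoid using $\partial_rG_r=0$, and then one has to control the polynomial $4r\Delta_r-(r^2-r_+^2)\partial_r\Delta_r$ on the trapped set --- clean for Kerr through the explicit factorization (where the subextreme bound $r_+>1$ is exactly what keeps the quadratic factor positive), and for Kerr--de Sitter requiring the perturbation argument above. Everything else is routine bookkeeping with the sign conventions and Proposition~\ref{prop:K props}.
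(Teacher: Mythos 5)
Your proof is correct and follows the same core reduction as the paper: using Proposition~\ref{prop:K props}(1) together with the trapping condition $\partial_rG_r=0$ to reduce the claim to the radial inequality $P(r)=4r\Delta_r-(r^2-r_+^2)\partial_r\Delta_r>0$ on $(r_+,r_c)$. The differences are in the periphery and in the last step. You dispose of $i=c$ directly by the sign comparison $(r_c^2+a^2)k_t+ak_\varphi=[(r^2+a^2)k_t+ak_\varphi]+(r_c^2-r^2)k_t<0$, whereas the paper reduces $i=c$ to $i=+$ by noting the corresponding momentum sets are nested; both are fine. For $\lambda=0$ your explicit factorization $P(r)=2(r-r_+)\bigl(r^2+(r_+-3)r+r_+\bigr)$ with discriminant $(r_+-1)(r_+-9)<0$ is a nice self-contained replacement for the citation of \cite[Lemma 6.5]{KHa}. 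The genuine divergence is the case $\lambda>0$: you argue by continuity in $\lambda$ (after checking that the quintic and quartic terms of $P$ cancel, so $P$ is a cubic vanishing at $r_+$), possibly shrinking $\lambda_0$. The paper instead writes $\Delta_r=-\lambda\prod_{i=0}^3(r-r_i)$ and carries out an exact algebraic estimate showing $P>0$ on all of $(r_+,r_c)$ for every subextreme configuration with $\lambda<\lambda_0$ as fixed in \eqref{eq:lambda_0}. This matters slightly downstream: the lemma and the final theorem are stated with that specific $\lambda_0$, so your perturbative version proves a marginally weaker statement (some unspecified, possibly smaller threshold, degenerating as the discriminant $(r_+-1)(r_+-9)$ approaches zero near extremality), though it still suffices for the existence claim in Proposition~\ref{prop:fb trapped set}. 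If you want the full strength, replace the continuity step by the explicit factorization of the quartic $\Delta_r$. As a side remark, you correctly used $ak_\varphi$ rather than the $a^2k_\varphi$ appearing in \eqref{eq:drGr=0}, which is consistent with \eqref{eq:G_r} and with the display used later in the paper's own proof.
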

 \begin{proof}
 Let $(x,k)\in K\cap\{k_t<0\}\cap\{k(v_i)\geq 0\}$.
 To begin, we note that $k(v_i)\geq 0$ corresponds to $k_t\geq -\tfrac{ak_\varphi}{r_i^2+a^2}$. Since $k_t<0$, one must have $ak_\varphi>0$. 
 We also note that for any fixed $ak_\varphi>0$, the set $\{-\tfrac{ak_\varphi}{r_c^2+a^2}\leq k_t<0\}$ is contained in the set $\{-\tfrac{ak_\varphi}{r_+^2+a^2}\leq k_t<0\}$. We can thus focus on showing the result for the larger of the two sets. 
 
 Next, as described in \cite{Dyatlov}, the condition $\partial_rG_r=0$ is reduced to
 \begin{align}
 -\left(4r\Delta_r-(r^2+a^2)\partial_r\Delta_r\right)&=\partial_r\Delta_r\frac{-ak_\varphi}{k_t}\, .
\end{align}

Combining this with $k(v_+)\geq 0$, the first point of Proposition \ref{prop:K props}, in particular $\partial_r\Delta_r>0$ on $K$, and making use of the fact that $k_t$ and $ak_\varphi$ are of opposite sign, so that the right-hand side above is positive, one obtains the estimate 
\begin{align}
-\left(4r\Delta_r-(r^2+a^2)\partial_r\Delta_r\right)\geq\partial_r\Delta_r(r_+^2+a^2)
\end{align}
which may be simplified to 
\begin{align}
 P(r):= 4r\Delta_r-(r^2-r_+^2)\partial_r\Delta_r\leq 0\, .
\label{eq:P neg}
\end{align}
In the case $\lambda=0$, it was shown in \cite[Lemma 6.5]{KHa} that $P(r)$ is monotonously increasing for $r>r_+$ and vanishes at $r=r_+$. 
To analyse $P(r)$ on $(r_+,r_c)$ in the case $\lambda\neq 0$, we note that $\Delta_r$ and $\partial_r\Delta_r$ can be written as
\begin{align*}
\Delta_r&=-\lambda \prod\limits_{i=0}^{3}(r-r_i)\, ,\\
\partial_r\Delta_r&=-\lambda \sum\limits_{i=0}^{3}\prod\limits_{j\neq i}(r-r_j)\, ,
\end{align*}
where we have identified $r_1=r_-$, $r_2=r_+$, $r_3=r_c$, and $r_0=-(r_1+r_2+r_3)$.
Thus, we have, taking into account that $r_0<r_1<r_2<r<r_3$ on $\MI$, 
\begin{align*}
P(r)&=4r\Delta_r+\lambda(r+r_2)(r-r_2)\sum\limits_{i=0}^{3}\prod\limits_{j\neq i}(r-r_j)\\
&=4r\Delta_r+\lambda(r+r_2)(r-r_2)^2(r-r_1)(2r-r_3-r_0)-(r+r_2)\Delta_r\\
&\hphantom{=}+\lambda(r+r_2)(r-r_2)^2(r-r_3)(r-r_0)\\
&>(3r-r_2)\Delta_r+\lambda(r+r_2)(r-r_2)^2(r-r_1)(2r+r_1+r_2)-(r+r_2)\Delta_r\\
&=2(r-r_2)\Delta_r+\lambda(r+r_2)(r-r_2)^2(r-r_1)(2r+r_1+r_2)>0\, ,
\end{align*}
on $\MI$. In the third step, we have used $(r-r_1)>(r-r_2)$.
Therefore, $P(r)>0$ on $(r_+,r_c)$, in contradiction to \eqref{eq:P neg}.
\qeds
\end{proof}

We can now show Proposition \ref{prop:fb trapped set}.\\
\begin{proof}{\it{Proof of Proposition \ref{prop:fb trapped set}:}}
Let $\lambda_0$ be given by \eqref{eq:lambda_0}, and assume $0\leq \lambda<\lambda_0$. Let us assume $(x,k)=(r,\hat x, k_r, \hat k)\in \cN^-\cap\Gamma\cap\{k(v_i)\geq 0\}$. Here, $(\hat x, \hat k)$ is the projection of $(x,k)$ to $T^*(\rr_t\times\ss^2_{\theta,\varphi})$.

Then, recalling the form of $\Gamma^\pm$ in \eqref{eq:Gamma pm}, there must be a $\tilde r$ so that $(\tilde r, \hat x, 0,\hat k)\in K$. By Lemma~\ref{lemma:Npm K overlap}, $k(v_i)\geq 0$ then implies $k_t>0$. However, following the proof of the second point of Proposition~\ref{prop:K props}, if $k_t>0$ and $k$ is past directed, so that $k_t$ satisfies the bound \eqref{eq:est}, then  $(\tilde r, \hat x, 0,\hat k)$ cannot be in $K$ for any $\tilde{r}\in (r_+,r_c)$ (or $\tilde{r}\in (r_+,\infty)$ for $\lambda=0$), in contradiction to the assumption that $(x,k)\in \Gamma^\pm$.
\qeds
\end{proof}

From this result, it now follows immediately that
\begin{theorem}
Let $0<\lambda<\lambda_0$, with $\lambda_0$ given by \eqref{eq:lambda_0}. Let $(\cM, g)$ be a subextreme Kerr-de Sitter spacetime with cosmological constant $\lambda$, and let $\omega_U$ be the Unruh state for the Klein-Gordon field on $\cM$ as defined in Definition~\ref{def:Unruh}. If mode stability holds, then the Unruh state is Hadamard on $\cM$. In particular, this shows that the Unruh state is Hadamard up to, but not including, the inner horizon of the black hole.
\end{theorem}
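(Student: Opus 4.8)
The plan is to re-run the proof of Theorem~\ref{thm:Hadamard} almost verbatim, replacing only its final, geodesic-theoretic step — the sole place where the smallness of $\abs a$ was used — by Proposition~\ref{prop:fb trapped set}. Note first that $\lambda>0$ is needed merely so that the cosmological horizon $\sH_c$, which enters Definition~\ref{def:Unruh}, is present. By Proposition~\ref{prop:well-def}, $w$ is already the two-point function of a quasi-free state on $\cA$, so the only thing left is to verify the microlocal spectrum condition $\WF'(w)=\cC^+$. As in the proof of Theorem~\ref{thm:Hadamard}, the first identity in \eqref{eq:conds on 2pt fct} together with the Propagation of Singularities Theorem in the form of \cite[Lemma 6.5.5]{DH} gives $\WF'(w)\subset\cN\times\cN$ and propagates (co)singularities along the lifts of null geodesics; combined with positivity of $w$ and Cauchy--Schwarz, it then suffices to establish the pointwise statement $\WF'(w)\cap\Delta_{T^*\cM\times T^*\cM}\subset\cN^+\times\cN^+$. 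So I fix $(x,k)$ with $(x,k;x,k)\in\WF'(w)$; then $k\in\cN$, and I must rule out $k\in\cN^-$.

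As in \cite{Klein}, I split according to the bicharacteristic through $(x,k)$: either it ends at $\sH_+\cup\sH_c$, or it is forward/backward trapped, i.e.\ lies in $\Gamma\subset T^*\MI$. The first case is treated exactly as in \cite{Klein}: working on $\tilde\cM$, one uses the explicit form \eqref{eq:def w} of $w$, the mapping properties of $\tilde E$, a decomposition by cutoff functions together with a support analysis of the resulting pieces, and the decay estimates \eqref{eq:decay horizons}, to show directly that such a point with $k\in\cN^-$ is not in $\WF'(w)$. No part of this argument invokes $\abs a<a_0$, so it applies to every subextreme Kerr--de Sitter spacetime with $0<\lambda<\lambda_0$.

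It remains to handle a trapped bicharacteristic $(x,k)\in\Gamma\subset T^*\MI$. On $C_0^\infty(\MI)\times C_0^\infty(\MI)$ each piece $w_i$ admits the Fourier representation \eqref{eq:Fourier-version w_i I}, from which — using the decay estimates \eqref{eq:decay horizons}, again with no constraint on $\abs a$ — one reads off that $w_i$ satisfies the KMS-type analyticity property at temperature $2\pi\kappa_i^{-1}$ with respect to the Killing field $v_i$, $i\in\{+,c\}$. Running the argument of \cite[Theorem 5.1]{SV} for $w_+$ and $w_c$ separately then yields: if $k(v_i)<0$, then $(x,k;x,k)\notin\WF'(w_i)$. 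Now suppose, towards a contradiction, that $k\in\cN^-$. Since the geodesic through $(x,k)$ lies in $\Gamma$, the contrapositive of Proposition~\ref{prop:fb trapped set} forces $k(v_+)<0$ \emph{and} $k(v_c)<0$; hence $(x,k;x,k)$ lies in neither $\WF'(w_+)$ nor $\WF'(w_c)$, and since $w=w_++w_c$ we get $(x,k;x,k)\notin\WF'(w)$, contradicting the choice of $(x,k)$. Therefore $k\in\cN^+$.

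Combining the two cases gives $\WF'(w)\cap\Delta_{T^*\cM\times T^*\cM}\subset\cN^+\times\cN^+$, which by the reduction above upgrades to $\WF'(w)=\cC^+$; hence $\omega_U$ is Hadamard on $\cM$. Since the physical spacetime $\cM$ reaches up to, but does not contain, the inner horizon $\sH_-$ (which belongs to its future boundary), this is precisely the Hadamard property on the entire region of definition, establishing the last assertion. The only input beyond \cite{Klein} is Proposition~\ref{prop:fb trapped set}, so in that sense there is no genuinely new obstacle here; the one point requiring care is the bookkeeping claim that in the proof of Theorem~\ref{thm:Hadamard} the hypothesis $\abs a<a_0$ was used \emph{only} in the step just superseded — equivalently, that both the horizon analysis of the first case and the KMS property of the $w_i$ hold uniformly across the whole subextreme range.
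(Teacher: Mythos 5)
Your proposal is correct and follows essentially the same route as the paper: the paper's proof likewise reduces everything to the trapped case via the proof of Theorem~\ref{thm:Hadamard} (whose only use of $\abs{a}<a_0$ is explicitly noted, in Section~3.3, to be the final geodesic step) and then applies Proposition~\ref{prop:fb trapped set} to conclude $(x,k)\in\cN^+$; your contrapositive phrasing is logically identical to the paper's direct implication. No gaps.
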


\begin{proof}
Following the proof of Theorem~\ref{thm:Hadamard}, we see that the only case that we need to consider is geodesics in $\Gamma$ that do not approach $\sH_+$ or $\sH_c$. If such a geodesic is specified by some $(x,k)\in T^*\MI$, then by the proof of Theorem~\ref{thm:Hadamard}, $(x,k;x,k)$ cannot be in $\WF'(\omega)$ if $k(v_+)<0$ and $k(v_c)<0$. In other words, for $(x,k;x,k)\in\WF'(\omega)\cap(\Gamma\times\Gamma)$, either $k(v_+)\geq 0$ or $k(v_c)\geq 0$. By Proposition~\ref{prop:fb trapped set}, this then implies that $(x,k)\in\cN^+$, showing the desired result.
\qeds
\end{proof}

\section{Application: The inner horizon}
\label{sec:appli}
The state constructed above now allows us to study different effects of the quantum Klein-Gordon field on Kerr-de Sitter. One example is the behaviour of the stress-energy tensor at the inner horizon $\sH_-$ of $\cM$, see Section~\ref{sec:Kruskal}. This quantity plays an important role in determining the influence of quantum effects on the (non-)validity of the strong cosmic censorship conjecture, or, in other words, the change of geometry near the inner horizon under perturbations by a quantum field.

\subsection{The stress-energy tensor in the Unruh state}
For theories such as the free scalar field theory, the classical stress-energy tensor typically consists of terms of the form $D_1\phi(x)D_2\phi(x)$, where $D_1$ and $D_2$ are differential operators. Therefore, the corresponding quantum observable needs to be renormalised. In curved spacetime, one can make use of the Hadamard point-split renormalisation procedure. In this procedure, one first computes $D_1\phi(x)D_2\phi(y)$, with $x$ and $y$ spacelike separated. Before taking the coinciding point limit, one subtracts the Hadamard parametrix $H$, the non-smooth part of the two-point function, acted on by $D_1$ and $D_2$. In this way, one obtains a locally and covariantly renormalised stress-energy tensor. 
However, in practice, this procedure is very hard to realise. The reason is that the forms of the two-point function and the Hadamard parametrix used for numerical calculations are not compatible with each other.

There are different methods to circumvent this problem. One is a pragmatic approach to point-split renormalisation \cite{LO, LO2}.
Another one is particularly well-suited if one is only interested in the leading behaviour of an observable such as the stress-energy tensor at the inner horizon $\sH_-$. In this case, instead of computing the expectation value in one state, one computes the difference of expectation values between the desired state and an unphysical comparison state, which is Hadamard in a neighbourhood of the horizon $\sH_-$, in contrast to the Unruh state. One may choose the comparison state in such a way that taking the limit onto $\sH_-$ leads to a comparatively simple formula that is well-suited for numerical evaluation.

This program was carried out in \cite{KSCH} to study the leading divergences of different components of the stress-energy tensor of the massive scalar in the Unruh state on Kerr-de Sitter in the case $m^2=2\lambda$.
Similar to previous results in the Unruh state on various black hole spacetimes \cite{HKZ,ZLO,ZCOO}, \cite{KSCH} indeed discovered a divergence of the form $(r-r_-)^{-n}$ for the components considered. Here, $n$ is the number of $r$-derivatives appearing in the operators $D_1$ and $D_2$ for the corresponding component. In all these results, the sign of the prefactor of the highest-order divergence was found to depend on the spacetime parameters and, in the case of rotating black holes, on the latitude $\theta$ on the inner horizon.

In light of these results, it seems likely that quantum effects can have a significant influence on the geometry near the inner horizon of a black hole. However, settling this question will require at least a more self-consistent consideration of semi-classical gravity.

\subsection{Universality at the inner horizon}

If the nature of the quadratic divergence of the stress-energy tensor presented above depends strongly on the choice of state, then these results do not contribute much towards understanding the generic situation. However, grasping the generic behaviour of the quantum field is crucial to determine its influence on the strong cosmic censorship conjecture, and for the results to be stable enough to be meaningful.

Fortuitously, it can be shown that the divergence of the stress-energy tensor, and in fact of a larger class of observables,
has a certain universality property. This result was first obtained for a class of charged black hole spacetimes in \cite{HWZ}, and generalised to all charged asymptotic de-Sitter black holes and Kerr-de Sitter spacetimes by \cite{HK}.

Let us focus on Kerr-de Sitter and a point $x_0\in\sH_-$. Let $\cU$ be a precompact neighbourhood of $x_0$ in the Kruskal manifold $\Mm$, and assume $\cU$ to be covered by the coordinate system $((r-r_-), y)$, where $y$ parametrises $\sH_-\cap \cU$.

Then the classical observables considered in \cite{HK} are of the form
\begin{align}
\label{eq:op form}
\cO^{j_1,j_2}_{\gamma_1,\gamma_2}(x)=\partial_r^{j_1}\partial_y^{\gamma_1}\Phi(x)\partial_r^{j_2}\partial_y^{\gamma_2}\Phi(x)\, ,\quad x\in \cM\cap \cU\, ,
\end{align}
or combinations thereof with smooth coefficients that extend smoothly to $\overline{\cU}$. Here, $j_{1,2}\in \nn$, and $\gamma_{1,2}\in\nn^3$ are multi-indices. In \cite{HK}, they were restricted to $j_1+j_2+\abs{\gamma_1}+\abs{\gamma_2}\leq 2$, but we drop this restriction. This class includes the stress-energy tensor components in the coordinate system $((r-r_-),y)$.

The quantised versions of these observables are not contained in the algebra of observables defined earlier. However, after renormalisation, they can be viewed as operator-valued distributions with values in an enlarged operator algebra, to which any Hadamard state can be extended \cite{HW}. In fact, denoting the renormalised quantum observable as $\colon \cO^{j_1,j_2}_{\gamma_1,\gamma_2}(x)\colon$, for any Hadamard state on the algebra $\cA$ and any $x\in\cM\cap\cU$, $\omega(\colon\cO^{j_1,j_2}_{\gamma_1,\gamma_2}(x)\colon)$ will be well-defined and finite. For the case when $x\to x_0$, it follows as a corollary of the results in \cite{HK}

\begin{proposition}{\cite[Proposition 4.1]{HK}}
Let $(\cM,g)$ be a subextreme Kerr-de Sitter spacetime, and assume that mode stability holds and that there exists a positive spectral gap $\alpha>0$ for the (massive) wave equation. Let $\cU$ as above be covered by coordinates $(r-r_-, y)$.\\
Let $\omega_1$ and $\omega_2$ be Hadamard states for the free scalar field on $\cM$, and let $\cO^{j_1,j_2}_{\gamma_1,\gamma_2}(x)$ be an observable of the form described in \eqref{eq:op form}.
Then there are constant $C_{j_1,j_2,\gamma_1,\gamma_2}$, so that for some $0<\beta'<\min(\alpha/\kappa_-,1)$ the estimates
\begin{align}
\abs{(r-r_-)^{j_1+j_2-\beta'}\left(\omega_1(\colon \cO^{j_1,j_2}_{\gamma_1,\gamma_2}(x)\colon)-\omega_2(\colon\cO^{j_1,j_2}_{\gamma_1,\gamma_2}(x)\colon)\right)}\leq C_{j_1,j_2,\gamma_1,\gamma_2}\,
\end{align} 
for $j_1+j_2>0$, and 
\begin{align}
\abs{\omega_1(\colon \cO^{j_1,j_2}_{\gamma_1,\gamma_2}(x)\colon)-\omega_2(\colon\cO^{j_1,j_2}_{\gamma_1,\gamma_2}(x)\colon)}\leq C_{j_1,j_2,\gamma_1,\gamma_2}
\end{align} 
if $j_1+j_2=0$ hold uniformly in $x\in\cU\cap\cM$.
\end{proposition}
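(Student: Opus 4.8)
The plan is to deduce the final statement as a corollary of the already-established universality estimate in \cite{HK}, reformulated to accommodate the larger class of operators \eqref{eq:op form} with $j_1+j_2+\abs{\gamma_1}+\abs{\gamma_2}$ no longer bounded by $2$, and then to take the coinciding-point limit $x\to x_0$. First I would recall the defining structure of the Hadamard point-split renormalization: for any two Hadamard states $\omega_1,\omega_2$, the difference $\omega_1(\colon D_1\Phi\, D_2\Phi\colon)(x)-\omega_2(\colon D_1\Phi\, D_2\Phi\colon)(x)$ equals $(D_1\otimes D_2)(w_1-w_2)$ evaluated on the diagonal, where $w_1-w_2$ is a \emph{smooth} function on $\cM\times\cM$ (the Hadamard parametrix $H$ cancels in the difference). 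So the whole statement reduces to controlling the size, and the rate of blow-up as $r\to r_-$, of the smooth function $\Theta:=w_1-w_2$ and its coordinate derivatives $\partial_r^{j_1}\partial_y^{\gamma_1}\otimes\partial_r^{j_2}\partial_y^{\gamma_2}\Theta$ restricted to the diagonal, near a point $x_0$ on the inner horizon $\sH_-$.

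Next I would set up the transport/propagation step. The key input from \cite{HK} is a decay estimate for solutions of the Klein-Gordon equation (and their derivatives) as one approaches $\sH_-$, of the schematic form $\abs{\partial_{r}^{n}E(f)}\lesssim \abs{r-r_-}^{\beta'-n}$ with $0<\beta'<\min(\alpha/\kappa_-,1)$, which is the inner-horizon analogue of the estimates \eqref{eq:decay horizons}; this encodes the spectral gap $\alpha$ and the surface gravity $\kappa_-$. Since $\Theta$ solves the Klein-Gordon equation in each variable and is smooth, one applies these estimates in each of the two arguments separately: differentiating $j_1$ times in $r$ in the first slot costs a factor $\abs{r-r_-}^{\beta'-j_1}$, and likewise $\abs{r-r_-}^{\beta'-j_2}$ in the second slot, while the $\partial_y$-derivatives are tangential to $\sH_-$ and hence harmless (they commute with the relevant vector fields up to lower-order terms and do not worsen the $(r-r_-)$-power). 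Combining the two slots and then restricting to the diagonal $x=y$ yields a bound by $\abs{r-r_-}^{2\beta'-j_1-j_2}$. Re-absorbing one factor $\abs{r-r_-}^{\beta'}\le C$ on the precompact set $\cU$ (using $\beta'>0$) gives exactly the claimed $\abs{r-r_-}^{j_1+j_2-\beta'}$ weight when $j_1+j_2>0$; when $j_1+j_2=0$ the function $\Theta$ itself is bounded near $\sH_-$ (the estimate gives $\abs{r-r_-}^{2\beta'}\le C$, even vanishing), which is the second inequality. Uniformity in $x\in\cU\cap\cM$ follows since $\cU$ is precompact and the constants from the decay estimates depend only on $\supp f$ and a compact neighbourhood, here absorbed into the fixed neighbourhood $\cU$.

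The main obstacle, in my estimation, is not the coinciding-point limit itself but verifying that the decay estimates of \cite{HK} genuinely survive the removal of the restriction $j_1+j_2+\abs{\gamma_1}+\abs{\gamma_2}\le 2$: one must check that higher-order $r$-derivatives of solutions to the Klein-Gordon equation still lose only one power of $\abs{r-r_-}$ per derivative, with no logarithmic corrections or loss of the gap $\beta'$, and that the smoothness of $w_1-w_2$ up to (but not including) $\sH_-$ is uniform enough to differentiate arbitrarily many times. This is where one genuinely needs the Hadamard property near $\sH_-$ of at least the comparison state together with the microlocal/transport machinery; in \cite{HK} it is handled by commuting the Killing vector field $v_-$ and the transversal derivative through the equation and iterating the first-order estimate, and the point is simply that this iteration is uniform in the order. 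A secondary, more bookkeeping-type difficulty is that the coordinate vector field $\partial_r$ associated to the Boyer–Lindquist-type coordinate $(r-r_-)$ degenerates at $\sH_-$ relative to the regular Kruskal frame on $\Mm$; one must be careful that the stated power of $(r-r_-)$ is the one adapted to the coordinate system $((r-r_-),y)$ in which the observables \eqref{eq:op form} are written, which is precisely the convention fixed before the proposition, so this is a matter of tracking the Jacobian between the $U_-$-frame and the $(r-r_-)$-coordinate and checking it contributes only bounded factors on $\overline{\cU}$.
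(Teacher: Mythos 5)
Your overall reduction---the Hadamard parametrix cancels in the difference of the two states, so everything comes down to bounding coordinate derivatives of the smooth bi-solution $W=w_1-w_2$ on the diagonal near $\sH_-$---matches the paper's starting point. But there is a genuine gap at the central step: you propose to ``apply the decay estimates in each of the two arguments separately'' to $W$. The single-solution decay estimates you invoke are statements about forward solutions $E(f)$ with compactly supported sources, with constants depending on those sources; they do not apply directly to a general smooth symmetric bi-solution, and a bi-solution is not a priori a finite (or even controlled) combination of products of such solutions. The paper's proof supplies exactly the missing structure: using the asymptotically de Sitter character of the spacetime, it expands $W(x,x')=\sum_i c_i\,\psi_i(x)\psi_i(x')$ with $c_i=\pm 1$, where each $\psi_i$ is a forward solution with source $b_i\in C_0^\infty(\cM)$ and, crucially, $\sum_i\norm{b_i}^2_{C^m}<\infty$. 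Only after this decomposition can one estimate each slot by the pointwise bounds on $\partial_r^{j}\psi_i$ near $\sH_-$ (obtained from Price's law in the region $\Omega$ together with either the Fredholm/variable-order Sobolev machinery or the transport-equation argument you allude to) and then sum the series. Your proposal omits this decomposition entirely, and without it or an equivalent device the argument does not close.

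A secondary inaccuracy: your intermediate claim that combining the two slots yields $\abs{r-r_-}^{2\beta'-j_1-j_2}$ is too strong. The forward solutions near $\sH_-$ have the form $\psi_i=c_i'+\tilde\psi_i$ with $c_i'$ constant and only $\tilde\psi_i$ decaying, so an undifferentiated factor is merely bounded, not $O(\abs{r-r_-}^{\beta'})$; the correct product bound is $\abs{r-r_-}^{\beta'-j_1-j_2}$ (a single factor of $\beta'$), which is precisely what the proposition asserts. Your final inequalities happen to coincide with the statement after you ``re-absorb'' one power of $\abs{r-r_-}^{\beta'}$, but the stronger intermediate claim would be false (e.g.\ for $j_1=1$, $j_2=0$, $\beta'>1/2$ it would wrongly predict that the difference vanishes at the horizon).
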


\begin{proof}
The starting point of the proof is that for two Hadamard states $\omega_1$ and $\omega_2$ on $\cA$, the bi-distribution 
\begin{align*}
W(f,h)=\omega_1(\Phi(f)\Phi(h))-\omega_2(\Phi(f)\phi(h))\, ,\quad f,h\in C_0^\infty(\cM)
\end{align*}
can, in fact, be written as
\begin{align*}
W(f,h)=\int\limits_{\cM\times\cM}W(x,y)f(x)h(y)\dVol_g(x)\dVol_g(y)
\end{align*}
with $\dVol_g$ the volume form induced by the metric and $W(x,y)$ a smooth function on $\cM\times\cM$. As a consequence, one can write for $x,x'\in \cM\cap\cU$
\begin{align*}
\omega_1(\colon \cO^{j_1,j_2}_{\gamma_1,\gamma_2}(x)\colon)-\omega_2(\colon\cO^{j_1,j_2}_{\gamma_1,\gamma_2}(x)\colon)=\lim\limits_{x'\to x}g(x,x')\partial_r^{j_1}\partial_y^{\gamma_1}\partial_{r'}^{j_2}\partial_{y'}^{\gamma_2}W(x,x')\, ,
\end{align*}
where $g(x,x')$ should be understood as the proper power and component of the parallel transport bi-tensor mapping $T_{x'}\cM$ to $T_x\cM$. The parallel transport bi-tensor ensures that the expression transforms like a tensor at $x$ under coordinate transforms, and simply reduces to the identity in the limit.

Next, making use of the asymptotic de-Sitter property of the spacetime, one can show that since $W(x,y)\in C^\infty(\cM\times\cM)$ is a symmetric real bi-solution to the massive wave equation, its restriction to $\cU\times\cU$ has an expansion of the form
\begin{align*}
W(x,x')=\sum\limits_{i\in\nn} c_i \psi_i(x)\psi_i(x')
\end{align*}
with $c_i=\pm 1$ and $\psi_i$ satisfying 
\begin{align*}
(\Box_g-m^2)\psi_i=b_i\, ,\,\, b_i\in C_0^\infty(\cM) 
\end{align*}
with vanishing initial data to the past of $\supp(b_i)$. In other words, the $\psi_i$ are forward solutions to the massive wave equation with source $b_i$. Moreover, the $b_i$ satisfy
\begin{align}
\label{eq:b conv}
\sum\limits_i\norm{b_i}^2_{C^m(\cM)}=C(m)<\infty\, .
\end{align}
In this way, for any $x$ in $\cU\cap\cM$, one can express
\begin{align}
\label{eq:op expansion}
\omega_1(\colon \cO^{j_1,j_2}_{\gamma_1,\gamma_2}(x)\colon)-\omega_2(\colon\cO^{j_1,j_2}_{\gamma_1,\gamma_2}(x)\colon)=\sum\limits_ic_i\left(\partial_r^{j_1}\partial_y^{\gamma_1}\psi_i(x)\right)\left(\partial_r^{j_2}\partial_y^{\gamma_2}\psi_i(x)\right)\, .
\end{align}

It then remains to estimate the derivatives of the forward solutions $\psi_i$ pointwise in $\cU\cap\cM$ in terms of some $C^m$-norm of the sources $b_i$. To do so, assume that Price's law holds in the domain $\Omega$ depicted in Figure~\ref{fig:Omega}: For a solution of $\Box_g\psi=0$ of the form $\psi=\tilde\psi+c$ with $c\in\cc$, assume $\tilde \psi\in e^{-\alpha'u}H^s(\Omega)$ with $\alpha'>0$ and $s$ sufficiently large. Here, $u(t,r)=t-r_*+U(r)$, where $U\in C^\infty([r_-,r_+);\rr)$ is chosen such that the level sets of $u$ are spacelike. The region $\Omega\subset \MII$ is defined as $\Omega=(u_2,\infty)_u\times(r_1,r_2)_r\times\ss^2$  for some $u_2=u(1-r_*(r_2),r_2)$ and $r_-<r_1<r_2<r_+$. The validity of Price's law on this domain follows from an analysis of the wave equation in a neighbourhood of $\{r_+\leq r\leq r_c\}$ \cite{PV1}.

\begin{figure}
\includegraphics[width=0.5\textwidth]{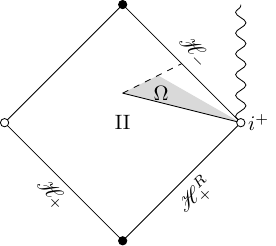}
\caption{The region $\Omega$ in which Price's law is assumed to hold. The lower boundary indicated by the black line represents $\{r=r_2\}$, while the upper boundaries are $\{r=r_1\}$ and $\{u=u_2\}$, the latter represented by the dotted line. The region enclosed by the dotted and solid lines and the inner horizon is where the estimates will be obtained, compare also \cite[Figure 2]{HK}.}
\label{fig:Omega}
\end{figure}

In \cite{HK}, it is shown in two ways that the decay provided by Price's law can be converted into the desired bounds for the limit $x\to x_0$. The first is an adaptation of the methods of \cite{HV}, working out the Fredholm property of the operator $\Box_g-i\cQ$ between suitable variable-order Sobolev spaces on a domain including the relevant piece of the inner horizon. Here, $\cQ$ is a time-translation-invariant complex absorbing operator supported in $r<r_-$, and the spacetime is modified in $r<r_-$ to include another horizon at some $r_0<r_-$. Upon restriction to $r>r_-$, this results in the required conormal regularity to obtain the desired result by Sobolev embedding. 

The second is to realise that, to first order in $r-r_-$, the wave equation can be written as a concatenation of two transport equations which can be integrated up from level sets of $(r-r_-)$ in $\Omega$, transforming decay along this surface into the desired bounds. 

This method makes use of energy estimates near $r=r_-$ to estimate $\psi$ and its derivatives in some properly weighted $L^2$-space on $(u_2,\infty)_u\times(r_-, r_2)_r\times\ss^2$ and uses the fact that $\partial_t$, $\partial_\phi$ and the Carter operator $\cC:=\tfrac{\chi^2 }{\Delta_\theta \sin^2\theta}\left(a\sin^2\theta\partial_t+\partial_\varphi\right)^2+\tfrac{1}{\sin\theta}\partial_\theta \Delta_\theta\sin\theta\partial_\theta$ commute with the wave equations to control also angular derivatives of $\psi$.

Using these classical bounds to estimate the summands in \eqref{eq:op expansion} and making use of \eqref{eq:b conv} then concludes the proof.
\qeds
\end{proof}

This result shows that the difference between expectation values of observables of the form \eqref{eq:op form} in two different Hadamard states at worst diverges like its classical analogue when $\sH_-$ is approached. As a direct consequence, the numerically computed leading divergences of different components of the stress-energy tensor in the Unruh state described above in fact constitute the universal leading divergences of the corresponding component of the stress-energy tensor. The state-dependence only enters as a subleading term as long as the spectral gap $\alpha$ is positive. 
 
\section{Summary}
\label{sec:Summary}
In this paper, we have reviewed the construction of the Unruh state for the free scalar quantum field on Kerr-de Sitter. We have seen how a novel geometric argument given in \cite{KHa} for Kerr spacetimes can be generalised to Kerr-de Sitter. This allowed us to extend the proof of the Hadamard property of the Unruh state to all subextreme Kerr-de Sitter spacetimes with sufficiently small cosmological constant, given that mode stability holds. Additionally, we have recalled the application of the Unruh state in numerical computations of the leading divergence of the stress-energy tensor at the inner horizon. The outcome of the computations, combined with a tentative, first-order analysis of the backreaction onto the spacetime geometry, suggests that the inner horizon is converted into a singularity, whose exact nature may depend on the parameters of the black hole and the quantum theory. We recalled the state independence of the leading divergence, which arises from a bound on the difference of expectation values near the horizon between different Hadamard states, and the behaviour of the classical theory. 

While the work compiled here demonstrates that there is progress in understanding the interaction of quantum fields and rotating black holes, there are still a number of open questions. First of all, even though the results in the last section certainly indicate that quantum effects can have a strong influence on the geometry at the inner horizon, this is not based on a self-consistent solution to semi-classical gravity. In fact, it does not even take into account an approximate implementation of black hole evaporation, which may already change the situation at the inner horizon significantly. 

Moreover, the scalar field is only the simplest toy model theory, and one would like to replace it with more interesting field theories such as linearised Yang-Mills theory or linearised gravity. However, while Hadamard states have been shown to exist for all globally hyperbolic spacetimes in the Yang-Mills case \cite{MS}, see also \cite{GW}, linearised gravity remains a challenging problem \cite{CMS} that is only solved in special cases \cite{GMW,GW2, BDM}.

\paragraph{\bf Data availability statement} Data sharing is not applicable to this article as no new data were created or analysed in this study.

\paragraph{\bf Conflict of interest statement} The author has no conflict to disclose.

\bibliographystyle{ieeetr}
\bibliography{Kdsbib}
\end{document}